\def\tagform@#1{\maketag@@@{\ignorespaces#1\unskip\@@italiccorr}}
\let\orgtheequation\theequation
\def\theequation{(\orgtheequation)}
\newcommand{\pbsc}[1]{\textsc{#1}}
\newcommand{\ED}{\pbsc{Element Distinctness}}
\newcommand{\collision}{\pbsc{Collision}}
\newcommand{\ksum}{$k$-\pbsc{Sum}}
\newcommand{\eps}{\varepsilon}
\DeclareMathOperator{\adeg}{\widetilde{deg}}
\newcommand{\adv}{\mathrm{ADV}}
\newcommand{\ncqq}{Q^{\textrm{nc}}}
\newcommand{\cqq}{Q}
\newcommand{\HF}{\FF_H}
\newcommand{\Mtarget}{M}
\DeclareMathOperator{\xpoly}{xpoly}
\DeclareMathOperator{\madv}{MADV}
\newcommand{\ket}[1]{| #1 \rangle}
\newcommand{\bra}[1]{\langle #1 |}
\newcommand{\braket}[2]{\langle #1 | #2 \rangle}
\newcommand{\ketbra}[2]{| #1 \rangle\!\langle #2 |}
\newcommand{\proj}[1]{| #1 \rangle\!\langle #1 |}
\newcommand{\norm}[1]{\left\| #1 \right\|}
\newcommand{\abs}[1]{\left| #1 \right|}
\newcommand{\inv}[1]{\frac{1}{ #1 }}
\DeclareMathOperator{\tr}{tr}
\DeclareMathOperator{\Span}{span}
\renewcommand{\C}{\mathbb{C}}  \newcommand{\I}{\mathbb{I}} \newcommand{\J}{\mathbb{J}} \newcommand{\R}{\mathbb{R}}
\newcommand{\FF}{\mathcal{F}}      
\newcommand{\HH}{\mathcal{H}}
\renewcommand{\SS}{\mathcal{S}}
\newtheoremstyle{sansthesis}{5pt}{5pt}{\itshape}{}{\bfseries\sffamily}{}{.7em}{} 
\theoremstyle{sansthesis}
\newtheorem{thm}{Theorem}
\newaliascnt{definition}{thm}
\newtheorem{definition}[definition]{Definition}
\newcommand{\dt}[1]{\textbf{#1}}
\newaliascnt{lem}{thm}
\newtheorem{lem}[lem]{Lemma}
\newaliascnt{claim}{thm}
\newtheorem{claim}[claim]{Claim}
\newaliascnt{fact}{thm}
\title{Explicit relation between all lower bound techniques for quantum query complexity}
\author{
	Loïck Magnin\thanks{NEC Laboratories America and Centre for Quantum Technologies, National University of Singapore; \href{mailto:loick@locc.la}{loick@locc.la}}
	 \and
	Jérémie Roland\thanks{NEC Laboratories America and Universit\'e Libre de Bruxelles; \href{mailto:jroland@ulb.ac.be}{jroland@ulb.ac.be}}
}
\date{}
\begin{document}

\maketitle 
\begin{abstract}
The polynomial method and the adversary method are the two main techniques to prove lower bounds on quantum query complexity, and they have so far been considered as unrelated approaches. Here, we show an explicit reduction from the polynomial method to the multiplicative adversary method.  The proof goes by extending the polynomial method from Boolean functions to quantum state generation problems. In the process, the bound is even strengthened. We then show that this extended polynomial method is a special case of the multiplicative adversary method with an adversary matrix that is independent of the function. This new result therefore provides insight on the reason why in some cases the adversary method is stronger than the polynomial method. It also reveals a clear picture of the relation between the different lower bound techniques, as it implies that all known techniques reduce to the multiplicative adversary method.
\end{abstract}


\section{Introduction}
\paragraph{Polynomial and adversary methods.} There are two main techniques to prove lower bounds on quantum query complexity: the polynomial method~\cite{BBC+01,KSW07,She11}, based on bounding the degree of the function seen as a polynomial, and adversary methods~\cite{BBB+97,Amb02,BS04, LM08,HNS08}, based on bounding the change in a progress function from one query to the next. In its original form~\cite{Amb02}, the adversary method bounds the additive change in the progress function, hence we will call it \emph{additive}, and the progress function is based on a matrix assigning positive weights to pairs of inputs. The polynomial method and this original adversary method are not comparable. Indeed, the original adversary method is limited by the ``certificate complexity barrier" \cite{Zha05,SS06}, that is, for total functions, $\adv(f) \leq \sqrt{C_0(f)C_1(f)}$ where $C_b(f)$ denotes the certificate complexity of $f$ for $f(x)=b$. It means that the original adversary method cannot prove lower bounds better than $\Omega(N^{1/2})$ for \ED. However, Aaronson and Shi \cite{AS04}  were able to prove a $\Omega(N^{2/3})$ lower bound using the polynomial method. On the other hand it is known that the adversary method can sometimes give better lower bounds than the polynomial method, in~\cite{Amb06} Ambainis exhibits a function with polynomial degree $d$ and adversary bound $\Omega(d^{1.3})$.

H{\o}yer, Lee, and {\v{S}}palek have extended the additive adversary method by allowing negative weights in the matrix~\cite{HLS07}, and have shown that the corresponding bound, $\adv^\pm(f)$, breaks the certificate complexity barrier. For simplicity, we will from now on refer to $\adv^\pm(f)$ as the additive adversary bound, implicitly allowing negative weights.

Recently, a series of works~\cite{FGG07,ACR+10,RS08,Rei11,LMRSS11} culminated in showing that this bound is tight in the bounded-error case for any function. However, this fundamental result does not answer all the questions about quantum query complexity as it suffers from two limitations. First, in some cases it is necessary to prove bounds for very small success probabilities, a regime where $\adv^\pm(f)$ might not be tight. For this reason, while the optimality of the additive adversary bound implies that quantum query complexity satisfies a direct sum theorem, it cannot be used to prove a strong direct product theorem, which requires to prove nontrivial bounds for exponentially small success probabilities. Secondly, while the proof of optimality of $\adv^\pm(f)$ implies that if a lower bound on the bounded-error quantum query complexity of a function can be proved with any method, it can also be proved with $\adv^\pm(f)$, this reduction is not constructive. Concretely, there are still examples of lower bounds that can be proved using the polynomial method for which the optimal adversary matrix is unknown, a typical example being the $\collision$ problem~\cite{AS04}\footnote{Until very recently it was also the case for the $\ED$ problem, whose lower bound was proved by reduction to $\collision$, but a direct adversary lower bound has now been shown by Belovs~\cite{Bel12}, and later extended to the $\ksum$ problem by Belovs and \v{S}palek~\cite{BS13}.}. 

\paragraph{Multiplicative adversary method.}
The first limitation has been overcome thanks to the introduction of another adversary-type method. By formalizing an ad-hoc technique proposed by Ambainis, de Wolf and {\v{S}}palek~\cite{Amb05-ASdW06,ASdW06}, {\v{S}}palek designed a new lower bound method which he called the multiplicative adversary method~\cite{Spa08}, as the idea is to bound the multiplicative change in the progress function for each query. Ambainis~\textit{et al.}~\cite{AMRR11} later showed that the multiplicative bound is always at least as strong as the additive one, and therefore also characterizes bounded-error quantum query complexity.
Moreover, the multiplicative adversary method can prove better lower bounds for small success probability than the additive adversary method, and this was used to prove a strong direct product theorem for quantum query complexity~\cite{LR12}.

\paragraph{Quantum state generation.} Even when we are only interested in the quantum query complexity of functions, it is useful to also consider state generation problems: in that case, instead of producing the output $f(x)$ on input $x$, the algorithm is required to prepare a quantum state $\ket{m_x}$. Since unitary transformations independent of $x$ may be applied without any query to $x$, a quantum state generation problem is completely defined by the Gram matrix $\Mtarget=\sum_{x, x'}\braket{m_{x'}}{m_x}\ketbra{x}{x'}$. In the special case of computing a function, $M$ is a Boolean matrix. Thus every algorithm can be seen as generating a Gram matrix $M$.
If the algorithm is allowed some error $\eps$, then the set of Gram matrices that are acceptable outputs for the algorithm can be bounded by a so-called output condition. Different output conditions have been used before, for example, the original adversary method~\cite{Amb02} was implicitly using a condition based on the $L_\infty$ norm, while the adversary method with negative weights in~\cite{HLS07} was implicitly using the factorization norm $\gamma_2$. Realizing that different output conditions could be combined with different (zero-error) lower bound methods was key to comparing the additive and multiplicative adversary methods in~\cite{AMRR11}. More recently, Lee and Roland~\cite{LR12} were able to characterize exactly the set of acceptable Gram matrices, hence providing an optimal output condition (see \autoref{claim:OptimalOutputCondition}), which allowed them to prove a strong direct product theorem for quantum query complexity. This also simplifies the study of lower bounds techniques as it implies that the bounded-error quantum query complexity of a problem can be studied by bounding the zero-error quantum query complexity of all Gram matrices that define valid output states for the problem. As a consequence it is sufficient to compare the zero-error bounds for two methods in order to compare them.

\paragraph{Our results.} In this article, we tackle the second limitation by giving an explicit reduction from the polynomial method to the multiplicative adversary method. In order to do so, we introduce yet another lower bound technique for quantum query complexity, which we call the extended polynomial method (\autoref{def:xpoly} and \autoref{thm:xpoly}) as it can be seen as an extension of the polynomial method to Gram matrices. As the degree of a Boolean function can be stated as the maximum index of its Fourier coefficients, that is, $\deg(f) = \max \{ \abs{S} : \langle\chi_S, f\rangle \neq 0\}$, we define the degree of a Gram matrix by the maximum index $k$ such that the Gram matrix has support on a Fourier vector $\ket{\chi_S}$ with $\abs{S}=k$, that is, $\deg(M)=\max \{ \abs{S} : \bra{\chi_S}M\ket{\chi_S}\neq 0\}$.

For Boolean functions, the polynomial and the extended polynomial bounds are equal in the zero-error case. However, for the approximate case, the extended polynomial method uses the tight output condition, and is therefore possibly stronger than the polynomial method (\autoref{thm:polynomial-new}).

We also compare the extended polynomial method to the multiplicative adversary method. More particularly, we show that in the limit $c\to\infty$, where $c$ is the maximum multiplicative change in the progress function for one query, the multiplicative bound tends to the extended polynomial method (\autoref{thm:xpoly-madv}). This proof is constructive, i.e., we give an explicit multiplicative adversary matrix for which we have the equality. It might come as a surprise that this matrix does not depend on the problem: it is the same adversary matrix for every function. Let us note that it was proved in~\cite{AMRR11} that the multiplicative bound is stronger than the additive bound in the limit $c\to 1$, that is, at the other end of the possible range for $c$. This new result therefore completes the picture of the relations between the different lower bound techniques in quantum query complexity (see \autoref{fig:relations}), and shows in particular that all these methods reduce to the multiplicative adversary method.

\begin{figure}
\begin{center}
\begin{tikzpicture}[punkt/.style={rectangle, rounded corners, shade, top color=white, bottom color=blue!50!black!20, draw=blue!40!black!60, very thick, minimum width=2.2cm, minimum height= .9cm},
linet/.style={very thick,draw=blue!40!black!60, shorten >=2pt, shorten <=2pt}]
\node[punkt] at(0,0) (madv) {$\madv_\eps(f)$};
\node[punkt] at (-2,-2.9) (advpm) {$\adv^\pm_\eps(f)$};
\node[punkt] at (2,-2.9) (advtilde) {$\xpoly_\eps(f)$};
\node[punkt] at (-2,-5.8) (adv) {$\adv_\eps(f)$};
\node[punkt] at (2,-5.8) (poly) {$\adeg_\eps(f)$};
\node at (0,-5.8) {$\gtrless$}; \node at (0,-5.4) {\ding{176}};

\draw[->,linet] (poly) -- node[midway, right]{\ding{175}} (advtilde) ;
\draw[->,linet] (advtilde) to [bend right=10] node[midway,right] {\ding{175}} (madv);
\draw[->,linet] (adv) -- node[midway, left] {\ding{172}} (advpm) ;
\draw[->,linet] (advpm) to [bend left=10]  node[midway,above,] {\ding{173}} (madv);
\draw[<-, very thick, dotted, draw=red!40!black!60,shorten >=2pt, shorten <=2pt, bend right=10] (advpm) to node[midway, right] {\ding{174}} (madv) ;
\draw[<-, very thick, dotted, draw=red!40!black!60,shorten >=2pt, shorten <=2pt] (advpm) to [bend left=10] node[midway, right] {\ding{174}} (poly) ;
\end{tikzpicture}
\end{center}
\caption{\label{fig:relations}Relations between the different methods to prove lower bounds for quantum query complexity. An arrow from method $A$ to method $B$ implies that any lower bound that can be proved with $A$ can also be proved with $B$ (i.e., $B$ is stronger than $A$). A solid blue arrow means that the reduction is constructive, i.e., we can obtain a witness for $B$ from a witness for $A$. \ding{172} \cite{HLS07} \ding{173} \cite{AMRR11} \ding{174} \cite{Rei11,LMRSS11} \ding{175} [This article] \ding{176} The original additive and the polynomial methods are incomparable \cite{Zha05,SS06,AS04,Amb06}}
\end{figure}
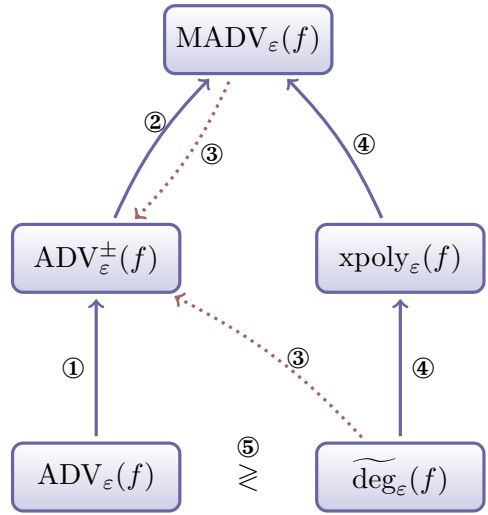

\section{Preliminaries}

\subsection{Gram matrices and fidelity}

\begin{definition}[Density matrices and Gram matrices]
 A \dt{density matrix} $\rho$ is a positive semidefinite matrix $\rho\succeq 0$ such that $\tr(\rho) = 1$. A \dt{normalized Gram matrix} $A$ is a positive semidefinite matrix $A\succeq 0$ such that $A\circ\I=\I$, where $\circ$ denotes the Hadamard (entry-wise) product.
\end{definition}
Note that any positive semidefinite matrix $A$ can be written as a Gram matrix in the broader sense, i.e., there always exists a set of vectors $\{\ket{a_x}\}$ such that $A_{xy}=\braket{a_x}{a_y}$. Here, the additional constraint $A\circ\I=\I$ means that we require those vectors to have norm $1$. Since all Gram matrices will be normalized in what follows, we will from now on refer to normalized Gram matrices as simply \emph{Gram matrices}.

\begin{definition}[Fidelity, Hadamard product fidelity]
	The \dt{fidelity} $\FF(\rho,\sigma)$ between two density matrices $\rho$ and $\sigma$ is defined by:
	\begin{align*}
		\FF(\rho,\sigma) = \tr\sqrt{\sqrt{\rho}\ \sigma \sqrt{\rho}}.
	\end{align*}
	The \dt{Hadamard product fidelity} $\HF(A,B)$ between two Gram matrices $A$ and $B$ is defined by:
	\begin{align*}
		\HF(A,B) = \min_{\ket u: \norm{\ket u}=1} \FF(A\circ \proj{u}, B \circ \proj{u}).
	\end{align*}
\end{definition}

The notation $\HF$ and the name Hadamard product fidelity\footnote{The name is chosen by analogy to the Hadamard product trace norm $\gamma_2$ (equivalent to the Hadamard product operator norm and also called factorization norm), which for Hermitian matrices can be written in the very similar form $\gamma_2(A)=\max_{\ket{u}:\norm{\ket{u}}\leq 1}\norm{A\circ \proj{u}}_{\tr}$.} are new to this article, but this quantity has been proved to be the tight output condition for the quantum query complexity in~\cite{LR12} (see \autoref{claim:OptimalOutputCondition} below).

\subsection{Quantum query complexity}

Consider a Boolean function $f:\{0,1\}^n\to\{0,1\}$. In the \emph{black-box model}, we are interested in computing $f(x)$ when $x$ is given by an oracle 
$O_x:\ket{i,b}\mapsto (-1)^{b \cdot x_i} \ket{i,b}$. We denote by $Q_\eps(f)$ the quantum query complexity of $f$, i.e., the minimum number of queries to $O_x$ necessary for any algorithm to output $f(x)$ with error at most $\eps$ (see, e.g.,~\cite{BuhrmanDeWolf02querysurvey}). Note that our choice of oracle computes the bits of $x$ in the phase. Another variant of this model considers an oracle that computes the bits in a register, but it can be shown that these models are equivalent.

Even when we are only interested in the quantum query complexity of functions, it is useful to also consider state generation problems~\cite{AMRR11,LMRSS11}. In that case, instead of producing the output $f(x)$ on input $x$, the algorithm is required to prepare a quantum state $\ket{m_x}\in\HH$. Since unitary transformations independent of $x$ may be applied without any query to $x$, a quantum state generation problem is completely defined by the Gram matrix $\Mtarget=\sum_{x, x'}\braket{m_{x'}}{m_x}\ketbra{x}{x'}$. For a quantum state generation problem specified by a Gram matrix $\Mtarget$, we define two different notions of query complexity. The coherent query complexity $\cqq_\eps(\Mtarget)$ is the minimum number of queries to the register oracle $O_x$ necessary to generate a state $\ket{n_x}\in\HH\otimes\HH'$ such that $\Re(\bra{n_x}(\ket{m_x}\otimes\ket{\bar{0}}))\geq\sqrt{1-\eps}$, where $\HH'$ is the workspace of the algorithm, $\ket{\bar{0}}\in\HH'$ is a default state for this workspace and $\Re(z)$ denotes the real part of a complex number $z$.  The non-coherent query complexity $\ncqq_\eps(\Mtarget)$ is defined similarly, except that it is enough to prepare a state $\ket{n_x}\in\HH\otimes\HH'$ such that $\Re(\bra{n_x}(\ket{m_x}\otimes\ket{m'_x}))\geq\sqrt{1-\eps}$, for an arbitrary set of states $\ket{m'_x}\in\HH'$ (that is, the workspace does not have to be reset to its default state).

For a Boolean function $f$, let us define the $\{1,-1\}$-valued function $\varphi: \{0,1\}^n \to \{1,-1\}: x \mapsto (-1)^{f(x)}$.
There are two natural quantum state generation problems associated to $f$, corresponding to the Gram matrices $F=\sum_{x,x'}\delta_{f(x),f(x')}\ketbra{x}{x'}$ and $\Phi=\sum_{x,x'}\varphi(x)\varphi(x')\ketbra{x'}{x}$, where $\delta$ is the Kronecker delta. Indeed, generating the Gram matrix $F$ non-coherently is exactly the same problem as computing $f$, and we therefore have $Q_\eps(f)=\ncqq_\eps(F)$, while generating the Gram matrix $\Phi$ coherently corresponds to \emph{computing the function in the phase}, i.e., we need to generate the state $\varphi(x)\ket{\bar{0}}$. The bounded-error complexities of these problems are closely related:
\begin{claim}[\cite{LR12}]
	\label{thm:qqch-phase-qqc-register}
	$Q_{(1-\sqrt{1-\eps})/2+\eps/4}(f)\leq \cqq_\eps(\Phi) \leq 2Q_{(1-\sqrt{1-\eps})/2}(f)$.
\end{claim}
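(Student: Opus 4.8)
The plan is to prove the two inequalities separately, each by an explicit construction converting an algorithm for one problem into an algorithm for the other, followed by an error analysis.

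\textbf{For the upper bound} $\cqq_\eps(\Phi)\leq 2Q_\delta(f)$ with $\delta=(1-\sqrt{1-\eps})/2$, I would start from an optimal algorithm $B$ computing $f$ with error $\delta$, i.e. $B$ sends the initial state $\ket{\bar 0}$ to some $\ket{\psi_x}$ whose output register, when measured, yields $f(x)$ with probability $1-\delta$. I then apply the standard compute--phase--uncompute trick: run $B$, apply a phase flip $Z$ on the output register (a query-free operation attaching $\varphi(x)=(-1)^{f(x)}$), and run $B^{-1}$, for a total of $2Q_\delta(f)$ queries. In the zero-error case this returns exactly $\varphi(x)\ket{\bar 0}$, the coherent phase state defining $\Phi$. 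For the general case I would decompose $\ket{\psi_x}=\sqrt{1-\delta}\,\ket{f(x)}\ket{g^{\mathrm{good}}_x}+\sqrt{\delta}\,\ket{\overline{f(x)}}\ket{g^{\mathrm{bad}}_x}$, track the action of $Z$, and use that the unitary $B^{-1}$ preserves inner products to get $\Re\bra{n_x}(\varphi(x)\ket{\bar 0})=1-2\delta=\sqrt{1-\eps}$ for the final state $\ket{n_x}$, which is exactly the coherent output condition for error $\eps$.

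\textbf{For the lower bound} $Q_{\delta+\eps/4}(f)\leq\cqq_\eps(\Phi)$, I would start from an algorithm $A$ generating $\Phi$ coherently with error $\eps$, producing $\ket{n_x}$ with $\Re\bra{n_x}(\varphi(x)\ket{\bar 0})\geq\sqrt{1-\eps}$, and recover $f(x)$ by a Hadamard test. Adjoining a control qubit prepared in $\frac{1}{\sqrt 2}(\ket 0+\ket 1)$, I apply $A$ controlled on the qubit, Hadamard the control, and measure it, outputting the result as the guess for $f(x)$. The controlled algorithm produces $\frac{1}{\sqrt 2}\big(\ket 0\ket{\bar 0}+\ket 1\ket{n_x}\big)\approx\frac{1}{\sqrt 2}(\ket 0+\varphi(x)\ket 1)\ket{\bar 0}$, so the final measurement returns $f(x)$ exactly in the ideal case. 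Setting $r=\Re\bra{\bar 0}\ket{n_x}$, the output condition gives $\varphi(x)\,r\geq\sqrt{1-\eps}$, and a short computation shows that the probability of the correct outcome is $\tfrac12(1+\varphi(x)\,r)\geq\tfrac12(1+\sqrt{1-\eps})$, i.e. error at most $(1-\sqrt{1-\eps})/2\leq\delta+\eps/4$. This uses the same number of queries as $A$, since each controlled call to the phase oracle can be implemented with a single query to $O_x$ by computing the control-AND into the query bit, querying, and uncomputing.

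The main obstacle I expect is the error bookkeeping rather than the constructions themselves: one must translate the real-part, fidelity-type output conditions into genuine measurement success probabilities, being careful both with the global phase $\varphi(x)$ and with the fact that $\ket{n_x}$ only approximates $\varphi(x)\ket{\bar 0}$. A secondary point needing justification is that controlling the coherent algorithm $A$ — including its queries — incurs no query overhead, which relies on the phase oracle admitting a controlled implementation at unit query cost.
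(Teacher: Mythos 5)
Your proof is correct and follows essentially the same route as the result it restates: the paper itself gives no proof of this claim (it is imported from \cite{LR12}), and your two constructions --- compute--phase--uncompute for the upper bound, and a controlled run of the phase-generation algorithm followed by a Hadamard test, with controlled phase queries simulated at unit cost via the $b$ register, for the lower bound --- are exactly the arguments behind the cited statement. Note that your lower-bound analysis actually yields error at most $(1-\sqrt{1-\eps})/2$, slightly stronger than the claimed $(1-\sqrt{1-\eps})/2+\eps/4$; this is consistent, since $Q_\delta(f)$ is nonincreasing in $\delta$, so the claim follows a fortiori.
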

This implies that to prove bounds on the bounded-error query complexity of $f$, it is sufficient to prove bounds on the query complexity of the related quantum state generation problem $\Phi$, and this is precisely the approach that we will use in this article.

Another advantage of considering quantum state generation problems is that we can study the bounded-error query complexity of a problem by bounding the zero-error query complexity of all Gram matrices that define valid output states for the problem. It follows from the following claim that this set of valid Gram matrices is characterized by the Hadamard product fidelity:
\begin{claim}[\cite{LR12}]
\label{claim:OptimalOutputCondition}
For any Gram matrix $M$ and any $\eps\geq 0$, we have
\begin{align*}
 Q_\eps(M)=\min_N\{Q_0(N):\HF(N,M)\geq\sqrt{1-\eps},\ N\succeq 0,\ N\circ\I=\I\}.
\end{align*}
\end{claim}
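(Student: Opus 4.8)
The plan is to prove the two inequalities separately, in both directions exploiting the correspondence between a Gram matrix weighted by a vector $\ket u$ and the reduced density matrix of a purification, combined with Uhlmann's theorem. The basic observation is this: if $\{\ket{n_x}\}$ is any set of unit vectors realizing a Gram matrix $N$ and $\ket u$ is a unit vector, then the input-weighted superposition $\ket{\Psi^u_N}=\sum_x u_x\ket x\ket{n_x}$ has reduced density matrix $N\circ\proj u$ on the input register (up to a complex conjugation that leaves the fidelity invariant), so $\ket{\Psi^u_N}$ purifies $N\circ\proj u$; likewise $\ket{\Psi^u_M}=\sum_x u_x\ket x\ket{m_x}\ket{\bar0}$ purifies $M\circ\proj u$. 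Their overlap is $\braket{\Psi^u_N}{\Psi^u_M}=\sum_x\abs{u_x}^2\bra{n_x}(\ket{m_x}\otimes\ket{\bar0})$, and Uhlmann's theorem identifies $\FF(N\circ\proj u,M\circ\proj u)$ with the maximum of $\abs{\braket{\Psi^u_N}{\Psi^u_M}}$ over all realizations of $N$ (the target being held fixed).

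For the inequality $\cqq_\eps(M)\geq\min_N\{\dots\}$, I would start from an optimal coherent $\eps$-error algorithm. Its final states $\ket{n_x}$ form a Gram matrix $N$ that the algorithm generates with zero error using $\cqq_\eps(M)$ queries, so $Q_0(N)\leq\cqq_\eps(M)$. The output condition guarantees $\Re\bra{n_x}(\ket{m_x}\otimes\ket{\bar0})\geq\sqrt{1-\eps}$ for every $x$; averaging with weights $\abs{u_x}^2$ gives $\abs{\braket{\Psi^u_N}{\Psi^u_M}}\geq\sqrt{1-\eps}$, hence $\FF(N\circ\proj u,M\circ\proj u)\geq\sqrt{1-\eps}$ for every $\ket u$, and taking the minimum over $\ket u$ yields $\HF(N,M)\geq\sqrt{1-\eps}$. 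Thus $N$ is feasible and witnesses $\min_N\{\dots\}\leq\cqq_\eps(M)$.

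The reverse inequality $\cqq_\eps(M)\leq\min_N\{\dots\}$ is the substantial part. Given a feasible $N$ with $Q_0(N)=q$, I would generate $N$ exactly with $q$ queries and, since $x$-independent unitaries cost nothing, it remains to rotate the realization of $N$ so that $\Re\bra{n_x}(\ket{m_x}\otimes\ket{\bar0})\geq\sqrt{1-\eps}$ holds simultaneously for all $x$. This reduces to the identity $\HF(N,M)=\max\min_x\Re\bra{n_x}(\ket{m_x}\otimes\ket{\bar0})$, where the maximum is over realizations of $N$; the previous paragraph already gives ``$\geq$'', so I need ``$\leq$''. Writing $\min_x$ as a minimum over probability distributions $p$ (so the objective becomes $\Re\braket{\Psi^{\sqrt p}_N}{\Psi^{\sqrt p}_M}$, with $\ket{\sqrt p}$ the vector of entries $\sqrt{p_x}$), I would swap the max over realizations with the min over $p$ using Sion's minimax theorem; after the swap the inner maximization is exactly Uhlmann's, and equals $\FF(N\circ\proj{\sqrt p},M\circ\proj{\sqrt p})$ (a global phase makes the optimal overlap real and nonnegative, so $\Re$ and modulus agree), whose minimum over $p$ is $\HF(N,M)$.

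I expect the minimax exchange to be the main obstacle, because the set of realizations of $N$ (isometric images of a fixed realization) is not convex. The fix I would pursue is to relax isometries to contractions $\norm{W}\leq1$, which convexifies and compactifies the domain while keeping the objective linear, check that this relaxation does not change the max--min value (the fidelity in Uhlmann's theorem is still attained at an isometry), and then verify the hypotheses of Sion's theorem. The remaining ingredients — the reduced-density-matrix computation and the Uhlmann characterization — are standard and should go through routinely.
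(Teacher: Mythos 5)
The paper itself gives no proof of this claim---it is imported from \cite{LR12}---and your proposal correctly reconstructs essentially the argument used there: the easy direction by averaging the coherent output condition with weights $\abs{u_x}^2$ and lower-bounding the Uhlmann fidelity via the explicit pair of purifications $\ket{\Psi^u_N},\ket{\Psi^u_M}$, and the hard direction by a Sion minimax exchange between the distribution $p$ and the realization of $N$, convexified by relaxing isometries to contractions (an isometric dilation of the optimal contraction then recovers a genuine realization without changing any overlap, and the needed inequality $\min_p \FF(N\circ\proj{\sqrt p},M\circ\proj{\sqrt p})\geq \HF(N,M)$ holds trivially since the $\sqrt{p}$ form a subset of all unit vectors). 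Since this matches the route of the cited proof, there is nothing substantive to flag.
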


\subsection{The polynomial method}

\begin{definition}[Approximate degree]
For any $\eps\geq 0$, the \dt{approximate degree} $\adeg_\eps(f)$ of a function $f:\{0,1\}^n\to\R$ is defined as:
	\begin{align*}
		\adeg_\eps(f) &= \min_p \left\{ \deg(p) : \forall x \in\{0,1\}^n,\ \abs{p(x) - f(x)} \leq \eps \right\},
	\end{align*}
where the minimum is over $n$-variate polynomials $p:\R^n\to\R$.
\end{definition}


\begin{thm}[Polynomial method \cite{BBC+01}]
	If $f$ is a Boolean function, then $Q_\eps(f) \geq \Omega\!\left( \adeg_\eps(f) \right)$.
\end{thm}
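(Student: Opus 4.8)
The plan is to follow the original argument of Beals, Buhrman, Cleve, Mosca and de Wolf: track how the amplitudes of the algorithm's state depend on the input $x$, show that they are low-degree polynomials, and deduce that the acceptance probability is itself a low-degree polynomial approximating $f$. First I would fix an algorithm computing $f$ with error at most $\eps$ using $T=Q_\eps(f)$ queries. Writing its state at each stage in the computational basis as $\ket{\psi}=\sum_{i,b,z}\alpha_{i,b,z}(x)\ket{i,b,z}$, where $i$ indexes the queried input bit, $b\in\{0,1\}$ the phase bit, and $z$ the workspace, I would prove by induction on the number of queries performed so far that every amplitude $\alpha_{i,b,z}(x)$ is a polynomial in $x_1,\dots,x_n$ whose degree is at most that number.

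The heart of the induction is the behavior of a single query. The input-independent unitaries applied between queries replace each amplitude by a fixed linear combination of the current amplitudes, and hence cannot increase the degree. A query $O_x$ multiplies the amplitude of $\ket{i,b,z}$ by $(-1)^{b\cdot x_i}$; since this factor equals $1$ when $b=0$ and equals $1-2x_i$ when $b=1$ (using $x_i\in\{0,1\}$), it raises the degree by at most one. Starting from a constant initial state (degree $0$), after $T$ queries every amplitude is therefore a polynomial of degree at most $T$.

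It then remains to pass from amplitudes to the acceptance probability. Since each final amplitude $\alpha_y(x)$ is a complex-valued polynomial of degree at most $T$, its squared modulus $\abs{\alpha_y(x)}^2$, being the sum of the squares of its real and imaginary parts, is a real polynomial of degree at most $2T$. Summing $\abs{\alpha_y(x)}^2$ over the outcomes $y$ that the algorithm accepts yields a real polynomial $p(x)\in[0,1]$ equal to the probability that the algorithm outputs $1$. The error guarantee gives $\abs{p(x)-f(x)}\le\eps$ for every $x\in\{0,1\}^n$, so $p$ is an admissible approximating polynomial, and hence $\adeg_\eps(f)\le\deg(p)\le 2T=2Q_\eps(f)$, that is, $Q_\eps(f)\ge\tfrac12\,\adeg_\eps(f)=\Omega(\adeg_\eps(f))$.

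I expect the main subtlety to be the degree bookkeeping across the query step rather than anything deep. Concretely, one should verify that the amplitudes remain \emph{multilinear} polynomials, which follows from restricting to $x\in\{0,1\}^n$ where $x_i^2=x_i$, and that the factor of two incurred when squaring amplitudes is the only loss, so that the bound indeed holds with the constant $\tfrac12$ absorbed into the $\Omega$. No other step poses a real obstacle, since the whole argument rests on the elementary observation that the chosen phase oracle acts by multiplication by the degree-one polynomial $1-2x_i$.
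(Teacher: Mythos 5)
Your proof is correct and is precisely the argument of Beals, Buhrman, Cleve, Mosca and de Wolf~\cite{BBC+01} that the paper cites for this theorem (the paper itself states it without proof): by induction, each amplitude after $t$ queries is a polynomial of degree at most $t$ in $x$, since the interleaved unitaries act linearly and the phase oracle multiplies by $(-1)^{b\cdot x_i}$, which equals $1-2x_i$ when $b=1$, so the acceptance probability is a real polynomial of degree at most $2T$ approximating $f$ pointwise within $\eps$, yielding $\adeg_\eps(f)\leq 2\,Q_\eps(f)$. Your degree bookkeeping, the passage from complex amplitudes to the real polynomial $p$ via real and imaginary parts, and the factor-of-two loss absorbed into the $\Omega$ are all handled correctly.
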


In this article, we will use some basic Fourier analysis to relate degree of a function with Gram matrices. For the sake of readability, we will identify a set $S\subseteq \{1,\dots,n\}$ with its characteristic vector $S\in\{0,1\}^n$: $S_i = 1$ if and only if $i\in S$, and thus $\abs{S}$ can be either the cardinal of the set $S$ or the Hamming weight of the vector $S$.
\begin{definition}[Fourier basis and Fourier coefficients]
For any $S\in\{0,1\}^n$, let us define $\ket{\chi_S} = \frac{1}{\sqrt{2^n}}\sum_x (-1)^{S\cdot x} \ket{x}$. For a function $\varphi:\{0,1\}^n \to \R$, define the (non-normalized) state $\ket\varphi = \frac{1}{\sqrt{2^n}}\sum_x \varphi(x)\ket{x}$. We define the $S$-th \dt{Fourier coefficient} of $\varphi$ as $\hat\varphi(S)=\braket{\chi_S}{\varphi}$.
\end{definition}
Let us note that the set $\{\ket{\chi_S}\}_S$ is an orthonormal basis and that by definition, we then have $\hat\varphi(S)=\frac{1}{2^n}\sum_x (-1)^{S.x}\varphi(x)$ and $\varphi(x)=\sum_S (-1)^{S.x}\hat\varphi(S)$, which are the usual Fourier transform over the hypercube and its inverse. With these notations, we can also write the degree of a function $\varphi$ as $\deg(\varphi) = \max_S \{\abs{S}:\hat\varphi(S) \neq 0\}$.

\subsection{The multiplicative adversary method}
Let us consider a quantum algorithm generating the Gram matrix $M$ with error at most $\eps$ using $T$ queries. Let $\ket{\psi^t_x}$ be the state of the algorithm right after the $t$-th query when the input is $x$, and
$M^t=\sum_{x,x'}\braket{\psi^t_{x'}}{\psi^t_x}\ketbra{x}{x'}$ be the corresponding Gram matrix. Note that $M^0=\J$ and $M^T\approx \Mtarget$ (more precisely $\HF(M^T,M)\geq\sqrt{1-\eps}$).
The basic idea of all adversary methods is to design a Hermitian matrix $W$ defining a progress function $W[M]=\tr[WM]$ such that the initial value $W[\J]$ is low and the final value $W[M^T]$ is high (or vice versa), and then to bound the maximal change in the progress function for any oracle call. Whereas the additive method bounds the difference $|W[M^{t+1}]-W[M^t]|$, the multiplicative method bounds the ratio $W[M^{t+1}]/W[M^t]$. In this paper we use the definition of the multiplicative adversary method given by \cite{LR12} which is a slight extension of the original multiplicative adversary method in \cite{Spa08}.
\begin{definition}[Multiplicative adversary bound]
\label{dfn:multiplicative}
Let $\Mtarget$ be a Gram matrix specifying a quantum state generation problem and for all $i\in\{1,\cdots,n\}$, $D_i = \sum_{x,x'} (-1)^{x_i + x'_i} \ketbra{x}{x'}$ the action of the phase oracle on input $i$.
Fix $c>1$.
The \dt{multiplicative adversary bounds} are:
	\begin{align*}
		\madv_0^c (\Mtarget) &= \inv{\log c} \max_{W\succeq 0}\left\{ \log \tr [W \Mtarget]: \tr [W\J]=1,\ W\circ D_i \preceq c W\ \forall i\right\},\\
		\madv^c_\eps (\Mtarget) &=\min_N \left\{\madv^c_0(N) : \HF(N,M) \geq \sqrt{1-\eps},\ N \succeq 0,\ N\circ\I=\I\right\}, \\
		\madv_\eps (\Mtarget) &= \sup_{c>1}\madv_\eps^c (\Mtarget).
	\end{align*}
We call \dt{adversary matrix} for $\madv_0^c (\Mtarget)$ any matrix $W \succeq 0$ such that $\tr [W\J]=1$ and $W\circ D_i \preceq c W$ for all i.
\end{definition}

\paragraph{Remark.} Let us note that the parameter $c$ represents the maximum multiplicative change in the progress function that can result from one query. Since, for any matrix $W\succ 0$, the constraint $W\circ D_i \preceq c W$ is always satisfied for $c\geq\norm{(W\circ D_i)^{1/2} W^{-1/2}}^2$, one could directly obtain the multiplicative bound $\madv_0$ by optimizing over $W$ and taking $c=\norm{(W\circ D_i)^{1/2} W^{-1/2}}^2$. However, it is useful to define the bound $\madv_0^c$ for fixed $c$ as this can be expressed as a semidefinite program (see~\cite{LR12}), where the objective value is optimized over $W$. The best bound on the quantum query complexity is then obtained by maximizing the objective value over both $W$ and $c$.

\begin{thm}[Multiplicative adversary~\cite{Spa08,LR12}] For any $\eps \geq 0$ and any Gram matrix $\Mtarget$, we have 
		$\cqq_{\eps}(\Mtarget) \geq \madv_\eps (\Mtarget)$.
\end{thm}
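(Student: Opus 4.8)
The plan is to first establish the zero-error bound $\cqq_0(\Mtarget)\geq\madv_0^c(\Mtarget)$ for every fixed $c>1$, and then to lift it to the bounded-error setting using the optimal output condition of \autoref{claim:OptimalOutputCondition}. Throughout I work with the phase oracle $O_x:\ket{i,b}\mapsto(-1)^{bx_i}\ket{i,b}$ used to define $D_i$ (equivalent to the register oracle). For the zero-error part, I fix an optimal adversary matrix $W$ for $\madv_0^c(\Mtarget)$, so $W\succeq 0$, $\tr[W\J]=1$, and $W\circ D_i\preceq cW$ for all $i$, together with a coherent algorithm generating $\Mtarget$ exactly in $T=\cqq_0(\Mtarget)$ queries. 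Writing $M^t$ for the Gram matrix of the algorithm's states after $t$ queries, I define the progress function $p^t=\tr[WM^t]$. The two endpoints are immediate: since $M^0=\J$ we have $p^0=\tr[W\J]=1$, and since at $\eps=0$ the coherent condition $\Re(\bra{n_x}(\ket{m_x}\otimes\ket{\bar 0}))\geq 1$ forces $\ket{n_x}=\ket{m_x}\otimes\ket{\bar 0}$, we get $M^T=\Mtarget$ and thus $p^T=\tr[W\Mtarget]$.

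The heart of the argument is a single-query bound $p^{t+1}\leq c\,p^t$. Input-independent unitaries preserve the Gram matrix, so the only effect comes from the oracle, and I fold the intermediate unitary into the pre-query state. I decompose that state according to the query register as $\ket{\psi^t_x}=\sum_{i,b}\ket{i,b}\otimes\ket{\xi^t_{x,i,b}}$, so that applying $O_x$ multiplies the $(i,b)$ block by $(-1)^{bx_i}$ and the resulting inner product acquires the factor $(-1)^{b(x_i+x'_i)}$. Setting $(G^t_{i,b})_{x,x'}=\braket{\xi^t_{x',i,b}}{\xi^t_{x,i,b}}$, each $G^t_{i,b}$ is positive semidefinite, and since the phase equals $1$ for $b=0$ and $(D_i)_{x,x'}$ for $b=1$, one obtains $M^t=\sum_{i,b}G^t_{i,b}$ and $M^{t+1}=\sum_i G^t_{i,0}+\sum_i(G^t_{i,1}\circ D_i)$.

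The main obstacle — and the single place where the adversary constraint is consumed — is converting the Hadamard product sitting inside the trace into a statement about $W\circ D_i$. Here I invoke the identity $\tr[W(G\circ D_i)]=\tr[(W\circ D_i)G]$, valid because $D_i$ is symmetric, and then use $W\circ D_i\preceq cW$ with $G^t_{i,1}\succeq 0$ to get $\tr[W(G^t_{i,1}\circ D_i)]\leq c\,\tr[WG^t_{i,1}]$; the $b=0$ terms are bounded trivially by $\tr[WG^t_{i,0}]\leq c\,\tr[WG^t_{i,0}]$ since $c>1$ and both matrices are positive semidefinite. Summing over $i,b$ gives $p^{t+1}\leq c\,\tr[WM^t]=c\,p^t$, and iterating yields $\tr[W\Mtarget]=p^T\leq c^Tp^0=c^T$, i.e.\ $T\geq\log\tr[W\Mtarget]/\log c$. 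Maximizing the right-hand side over all valid $W$ gives exactly $\cqq_0(\Mtarget)\geq\madv_0^c(\Mtarget)$.

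Finally, to reach the bounded-error bound I apply this zero-error inequality to every Gram matrix $N$ with $\HF(N,\Mtarget)\geq\sqrt{1-\eps}$ and combine it with \autoref{claim:OptimalOutputCondition}, which expresses $\cqq_\eps(\Mtarget)$ as the minimum of $\cqq_0(N)$ over precisely this set. Taking the minimum over such $N$ turns $\cqq_0(N)\geq\madv_0^c(N)$ into $\cqq_\eps(\Mtarget)\geq\madv_\eps^c(\Mtarget)$, and since this holds for every $c>1$, the supremum over $c$ yields $\cqq_\eps(\Mtarget)\geq\madv_\eps(\Mtarget)$, as desired.
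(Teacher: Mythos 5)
Your proposal is correct and follows essentially the same route as the standard argument the paper attributes to~\cite{Spa08,LR12} and sketches just before \autoref{dfn:multiplicative}: a progress function $\tr[WM^t]$ whose multiplicative change per query is bounded by $c$ using the positive semidefinite block decomposition of the pre-query Gram matrix together with the identity $\tr\left[W(G\circ D_i)\right]=\tr\left[(W\circ D_i)G\right]$ and the constraint $W\circ D_i\preceq cW$, then lifted to error $\eps$ via \autoref{claim:OptimalOutputCondition} and a supremum over $c$. All the steps check out, including the endpoint computations $\tr[W\J]=1$ and $M^T=\Mtarget$ at $\eps=0$, so there is nothing to fix.
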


\section{The extended polynomial method}


We now extend the polynomial method from Boolean functions to Gram matrices.

\begin{definition}[Extended polynomial bounds]
\label{def:xpoly}
Let $M$ be a Gram matrix specifying a quantum state generation problem. The \dt{extended polynomial bounds} are
\begin{align*}
	\xpoly_0(M) &= \max_S \{|S| : \tr\left[\proj{\chi_S}M\right] \neq 0\}, \\
	\xpoly_\eps(M) &= \min_N \left\{\xpoly_0(N) : \HF(N,M) \geq \sqrt{1-\eps},\ N\succeq 0,\ N\circ\I=\I \right\}.
\end{align*}

\end{definition}
\begin{thm}[Extended polynomial method] \label{thm:xpoly}
 For any $\eps \geq 0$ and any Gram matrix $\Mtarget$, we have 
	$\cqq_\eps(\Mtarget) \geq \xpoly_\eps(\Mtarget).$
\end{thm}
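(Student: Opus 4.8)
The plan is to reduce the bounded-error statement to a zero-error statement about exact state generation, and then to prove the latter by the same idea that underlies the polynomial method of~\cite{BBC+01}, now phrased in terms of Fourier coefficients of amplitude functions. The reduction is immediate from \autoref{claim:OptimalOutputCondition}: that claim expresses $\cqq_\eps(\Mtarget)$ as a minimization of $\cqq_0(N)$ over exactly the same feasible set $\{N : \HF(N,\Mtarget)\ge\sqrt{1-\eps},\ N\succeq 0,\ N\circ\I=\I\}$ that appears in \autoref{def:xpoly} for $\xpoly_\eps(\Mtarget)$. Hence, if I can establish the pointwise zero-error inequality $\cqq_0(N)\ge\xpoly_0(N)$ for every Gram matrix $N$, then minimizing both sides over this common set yields $\cqq_\eps(\Mtarget)=\min_N \cqq_0(N)\ge\min_N\xpoly_0(N)=\xpoly_\eps(\Mtarget)$, which is the theorem.

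It therefore remains to show that any algorithm generating a Gram matrix $N$ exactly with $T$ queries satisfies $\xpoly_0(N)\le T$. First I would record the standard degree bound on amplitudes. Writing $\ket{\psi^t_x}=\sum_z \alpha_z^t(x)\ket z$ for the state after $t$ queries, I claim each amplitude $\alpha_z^t:\{0,1\}^n\to\C$ is (the multilinear representative of) a polynomial of degree at most $t$. This follows by induction: the initial state is independent of $x$ (degree $0$); an $x$-independent unitary takes linear combinations and does not raise the degree; and one call to the phase oracle multiplies the amplitude of each basis state $\ket{i,b,w}$ by $(-1)^{b x_i}=1-2b x_i$, a polynomial of degree at most $1$ in $x$, so the degree grows by at most $1$ per query. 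A final $x$-independent unitary leaves the Gram matrix unchanged, so I may take $N=M^T$, the Gram matrix of the post-query states.

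Next I would rewrite $N$ in a form adapted to the Fourier basis. Since $N_{xx'}=\braket{\psi^T_{x'}}{\psi^T_x}=\sum_z \overline{\alpha_z^T(x')}\,\alpha_z^T(x)$, setting $\ket{\alpha_z}=\sum_x \alpha_z^T(x)\ket x$ gives $N=\sum_z \ket{\alpha_z}\!\bra{\alpha_z}$. Consequently
\[
 \tr[\proj{\chi_S}N]=\bra{\chi_S}N\ket{\chi_S}=\sum_z \abs{\braket{\chi_S}{\alpha_z}}^2=\sum_z \abs{\hat\alpha_z^T(S)}^2 .
\]
Because each $\alpha_z^T$ has degree at most $T$, its Fourier support lies in $\{S:\abs S\le T\}$, so $\hat\alpha_z^T(S)=0$ whenever $\abs S>T$. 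Hence $\tr[\proj{\chi_S}N]=0$ for all $\abs S>T$, giving $\xpoly_0(N)=\max\{\abs S:\tr[\proj{\chi_S}N]\neq 0\}\le T=\cqq_0(N)$, as required.

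The conceptually delicate points, rather than the calculation, are two. First, one must use the reduction in the right direction: \autoref{claim:OptimalOutputCondition} is an \emph{equality}, so the pointwise bound $\cqq_0\ge\xpoly_0$ transfers to the minima without loss, and no separate argument about the approximate regime is needed. Second, one must check that passing to the coherent model and to Gram matrices does not weaken the degree argument; the key observation is that representing $N$ as $\sum_z\ket{\alpha_z}\!\bra{\alpha_z}$ turns $\bra{\chi_S}N\ket{\chi_S}$ into a sum of squared magnitudes of Fourier coefficients of degree-$\le T$ functions, so a single high-degree amplitude would be required to make it nonzero, which the query bound forbids. I expect the main (if modest) obstacle to be verifying the degree-increment lemma cleanly for the phase oracle and confirming that both the final $x$-independent unitary and the coherence requirement leave untouched the Gram matrix $N$ on which $\xpoly_0$ is evaluated.
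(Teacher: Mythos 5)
Your proof is correct, but the zero-error core takes a genuinely different route from the paper's. The reduction to zero error is identical: both you and the paper invoke \autoref{claim:OptimalOutputCondition} and minimize the pointwise bound $\cqq_0(N)\geq\xpoly_0(N)$ over the common feasible set $\{N:\HF(N,\Mtarget)\geq\sqrt{1-\eps},\ N\succeq 0,\ N\circ\I=\I\}$. For the zero-error inequality itself, however, the paper deliberately casts the extended polynomial method as an \emph{adversary} method: it tracks the progress function $W[M^t]=\max_S\{\abs{S}:\tr[\proj{\chi_S}M^t]\neq 0\}$ directly at the Gram-matrix level, decomposing $M^t=\sum_i M_i^t$ into reduced Gram matrices, using positivity to get $W[M_i^t]\leq W[M^t]$, and then the identity $\proj{\chi_S}\circ D_i=\proj{\chi_{S'}}$ with $\abs{S'}=\abs{S}\pm 1$ to show each query raises the progress function by at most one. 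You instead lift the original amplitude-polynomial induction of the polynomial method: each amplitude $\alpha_z^T(x)$ has degree at most $T$, so writing $N=\sum_z\ketbra{\alpha_z}{\alpha_z}$ gives $\bra{\chi_S}N\ket{\chi_S}=\sum_z\abs{\hat\alpha_z^T(S)}^2=0$ for $\abs{S}>T$. Your route is more elementary and makes the name ``extended polynomial method'' transparent (it is literally the degree bound applied to \emph{all} amplitudes of the final state, the viewpoint the paper's Discussion attributes to the interpretation of $\xpoly$, and your computation $\bra{\chi_S}N\ket{\chi_S}=\sum_z\abs{\hat\alpha_z(S)}^2$ is exactly the identity the paper later derives inside the proof of \autoref{thm:polynomial-new}). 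What the paper's formulation buys is that it never fixes a realization of the algorithm's states and, more importantly, its key ingredients are reused verbatim later: the observation $\proj{\chi_S}\circ D_i=\proj{\chi_{S'}}$ is precisely what establishes $W\circ D_i\preceq cW$ for the explicit adversary matrix in \autoref{thm:xpoly-madv}, so the adversary-style phrasing prefigures the reduction to the multiplicative adversary method that is the paper's main point. Your two flagged ``delicate points'' are both handled correctly: the equality in \autoref{claim:OptimalOutputCondition} indeed transfers the pointwise bound to the minima, and at $\eps=0$ the coherence condition forces $\ket{n_x}=\ket{m_x}\otimes\ket{\bar{0}}$, so the final ($x$-independent) unitary leaves $M^T=N$ unchanged.
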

\begin{proof}
We prove the statement for $\eps=0$ and the general case immediately follows from \autoref{claim:OptimalOutputCondition} and the definition of $\xpoly_\eps(\Mtarget)$.
This proof actually considers the extended polynomial method as an adversary method. Let us define the progress function
\begin{align*}
	W[M^{t}] &= \max_{S} \left\{|S| : \tr[\proj{\chi_S}M^{t}] \neq 0 \right\}.
\end{align*}
Since $M^0 = \J = 2^n \proj{\chi_{\emptyset}}$, its initial value is $W[M^0] = 0$. The final value is $W[M^T]=\xpoly_0(M)$. It suffices to show that one query increases the progress function by at most one.

Let $M^t=\sum_i M_i^t$ be the Gram matrix just before the ($t+1$)-th query, where $M_i^t$ is the reduced Gram matrix corresponding to the part of the state where bit $x_i$ is queried (see, e.g.,~\cite{AMRR11} for details). Let $k=W[M^t]$ and note that by positivity, we have $\tr[\proj{\chi_S}M^{t}]= 0$ if and only if $\tr[\proj{\chi_S}M_i^{t}]= 0$ for all $i$. Therefore, we also have $W[M_i^t] \leq k$ for any $i$.

After the query, the Gram matrix of the algorithm will be $M^{t+1} = \sum_i M_i^t \circ D_i$. Let us observe that for any matrix $A$, we have $A\circ D_i = U_i A U^\dagger_i$ where $U_i = U^\dagger_i$ is the unitary matrix $U_i = \sum_x (-1)^{x_i} \proj{x}$. In particular, $\proj{\chi_S} \circ D_i = \proj{\chi_{S'}}$ where $S'= S\cup\{i\}$ if $i\not\in S$ and $S'= S\setminus\{i\}$ if $i\in S$.

 For all $S\in\{0,1\}^n$, we get:
\begin{align*}
	\tr\left[\proj{\chi_S}(M_i^t \circ D_i)\right] = \tr\left[(\proj{\chi_S}\circ D_i)M_i^t\right] = \!\sum_{T: |T|\leq k}\! \tr\left[(\proj{\chi_S}\circ D_i) \proj{\chi_T} M_i^t\right].
\end{align*}
This quantity is null for all $S$ such that $\abs{S} > k + 1$, therefore the progress function can increase by at most one per query.
\end{proof}


We have defined the extended polynomial method with the Fourier basis, but one might wonder if choosing another basis could provide better bounds. It turns out that this is not the case (the proof of this claim is deferred until \autoref{app:ProofMaxAdv}).
\begin{restatable}{claim}{claimmaxadv}
\label{thm:MaxAdv}
Let $\{\Pi_k:0\leq k\leq K\}$ be a set of orthogonal projectors such that
\begin{enumerate}
	\item[\ding{172}] $\sum_{k} \Pi_{k} = \I_{\C^{2^n}}$, 
	\item[\ding{173}] $\tr(\Pi_0\J)= 2^n$,
	\item[\ding{174}] $\forall i\in\{1,\dots,n\},\ \forall l, k$ such that $|l-k|>1,\ \tr\!\left[(\Pi_{l}\circ D_i)\Pi_{k}\right] = 0$.
\end{enumerate}
Then, for any Gram matrix $\Mtarget$, we have
\begin{align*}
	Q_0(M)\geq\xpoly_0(M) \geq \max_{k} \left\{ k : \tr\!\left(\Pi_{k}\Mtarget\right)\neq 0\right\}.
\end{align*}
\end{restatable}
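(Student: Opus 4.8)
The left inequality $Q_0(M)\ge\xpoly_0(M)$ is exactly \autoref{thm:xpoly} specialized to $\eps=0$, so the entire content lies in the right inequality $\xpoly_0(M)\ge\max_k\{k:\tr(\Pi_k M)\neq 0\}$. My plan is to recast both sides geometrically. Writing $W_m=\Span\{\ket{\chi_S}:\abs S\le m\}$ and using $M\succeq 0$, the condition $\bra{\chi_S}M\ket{\chi_S}=0$ is equivalent to $M\ket{\chi_S}=0$; hence $d:=\xpoly_0(M)$ is the smallest $m$ with $\operatorname{range} M\subseteq W_m$. Similarly, since $\Pi_k\succeq 0$, the quantity $\tr(\Pi_k M)$ vanishes iff $\Pi_k M=0$ iff $\operatorname{range} M\perp\operatorname{range}\Pi_k$. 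So it suffices to show that $\operatorname{range}\Pi_k\perp W_d$ for every $k>d$, since this forces $\tr(\Pi_k M)=0$ for all $k>d$ and therefore $\max_k\{k:\tr(\Pi_k M)\neq 0\}\le d$.

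The structural lemma I would isolate is that the Fourier grading \emph{refines} the grading induced by the $\Pi_k$: writing $V_m=\bigoplus_{j=0}^m\operatorname{range}\Pi_j$ (an orthogonal sum, since \ding{172} expresses $\I$ as a sum of orthogonal projectors, which are then necessarily pairwise orthogonal), I claim $W_m\subseteq V_m$ for every $m$. Granting this, the main inequality is immediate: $\operatorname{range} M\subseteq W_d\subseteq V_d$, and $V_d\perp\operatorname{range}\Pi_k$ for $k>d$ by orthogonality of the projectors. To prepare the lemma I first translate the locality hypothesis \ding{174} into a statement about the unitaries $U_i=\sum_x(-1)^{x_i}\proj x$ from the proof of \autoref{thm:xpoly}. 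Since $\Pi_l\circ D_i=U_i\Pi_l U_i$ and $U_i=U_i^\dagger=U_i^{-1}$, hypothesis \ding{174} reads $\tr[(U_i\Pi_l U_i)\Pi_k]=0$; as $U_i\Pi_l U_i$ and $\Pi_k$ are orthogonal projectors, this trace equals $\fnorm{(U_i\Pi_l U_i)\Pi_k}^2$, so it vanishes iff $\operatorname{range}\Pi_k\perp U_i\operatorname{range}\Pi_l$. Reading this with the roles of $k$ and $l$ exchanged gives $U_i\operatorname{range}\Pi_k\perp\operatorname{range}\Pi_l$ whenever $\abs{l-k}>1$, and resolving the identity via \ding{172} then yields the tridiagonal property $U_i\operatorname{range}\Pi_k\subseteq\operatorname{range}\Pi_{k-1}\oplus\operatorname{range}\Pi_k\oplus\operatorname{range}\Pi_{k+1}\subseteq V_{k+1}$, hence $U_i V_m\subseteq V_{m+1}$ for all $i$ and $m$.

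With these ingredients I would prove $W_m\subseteq V_m$ by induction on $m$. For the base case, hypothesis \ding{173} says $\bra{\chi_\emptyset}\Pi_0\ket{\chi_\emptyset}=1$ (using $\J=2^n\proj{\chi_\emptyset}$), which for a projector forces $\ket{\chi_\emptyset}\in\operatorname{range}\Pi_0=V_0$, so $W_0\subseteq V_0$. For the step, assuming $W_{m-1}\subseteq V_{m-1}$, it is enough to place each $\ket{\chi_S}$ with $\abs S=m$ inside $V_m$: choosing $i\in S$ and $S'=S\setminus\{i\}$ we have $\ket{\chi_{S'}}\in W_{m-1}\subseteq V_{m-1}$ and $U_i\ket{\chi_{S'}}=\ket{\chi_S}$, so $\ket{\chi_S}\in U_i V_{m-1}\subseteq V_m$ by the tridiagonal property. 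This closes the induction and hence the proof. I expect the only genuine obstacle to be the careful bookkeeping in extracting the tridiagonal property from \ding{174}: specifically, observing that a vanishing trace of two positive semidefinite projectors upgrades to exact orthogonality of their ranges, which is precisely what keeps the inductive ``raising'' step by $U_i$ inside $V_m$; the remaining arguments are routine linear algebra.
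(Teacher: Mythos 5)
Your proposal is correct and takes essentially the same route as the paper's proof in \autoref{app:ProofMaxAdv}: your lemma $W_m\subseteq V_m$ is exactly the paper's inductive statement $\SS_{\leq k}\subseteq\SS'_{\leq k}$, with the same base case extracted from \ding{173} and the same raising step $\ket{\chi_S}=U_i\ket{\chi_{S'}}$ using the inductive hypothesis. The only difference is bookkeeping: you convert property \ding{174} into orthogonality of ranges (the tridiagonal containment $U_i V_m\subseteq V_{m+1}$) via $\tr(PQ)=\fnorm{PQ}^2$ for projectors, whereas the paper runs the identical induction through trace inequalities and positive semidefinite monotonicity.
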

Therefore, while any set of projectors provides a lower bound on quantum query complexity, the best bound is achieved by the extended polynomial method, which corresponds to the special case $K=n$ and $\Pi_k = \sum_{S:|S|=k} \proj{\chi_S}$.

\section{Relation between the polynomial and the extended polynomial methods}

In this Section, we compare the strength of the polynomial and the extended polynomial methods. Let $f$ be a Boolean function and $\Phi$ the Gram matrix corresponding to computing $f$ in the phase. By definition of the extended polynomial method, we have that $\xpoly_0(\Phi) = \deg(f)$. However the equality is lost in the approximate case:
\begin{thm}\label{thm:polynomial-new}
	Let $f$ be a Boolean function and $\Phi$ be the Gram matrix corresponding to computing $f$ in the phase. For any $\eps \geq 0$, we have
	\begin{align*}
		\xpoly_{\eps}(\Phi) \geq \adeg_{\eps/2}(f).
	\end{align*}
\end{thm}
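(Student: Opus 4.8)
The plan is to pick a Gram matrix $N$ attaining the minimum in the definition of $\xpoly_\eps(\Phi)$, so $N\succeq 0$, $N\circ\I=\I$, $\HF(N,\Phi)\geq\sqrt{1-\eps}$ and $\xpoly_0(N)=:d$, and then to extract from $N$ an explicit real polynomial $p$ of degree at most $d$ with $\abs{p(x)-f(x)}\leq\eps/2$; this at once yields $\adeg_{\eps/2}(f)\leq d=\xpoly_\eps(\Phi)$. The first ingredient is that the degree bound upgrades to a genuine Fourier support statement: since $N\succeq 0$, the vanishing of the diagonal entry $\bra{\chi_S}N\ket{\chi_S}$ for $\abs S>d$ forces $N\ket{\chi_S}=0$, so $N$ is supported on $\Span\{\ket{\chi_S}:\abs S\leq d\}$. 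Consequently, for \emph{any} fixed vector $\ket b$, the map $x\mapsto\bra b N\ket x$ equals $\frac{1}{\sqrt{2^n}}\sum_{\abs S\leq d}(-1)^{S\cdot x}\bra b N\ket{\chi_S}$, a combination of characters $(-1)^{S\cdot x}=\prod_{i\in S}(1-2x_i)$ of degree at most $d$; hence its real part is a real multilinear polynomial of degree at most $d$. Whatever reference vector $\ket b$ I pick thus gives a degree-$d$ polynomial automatically, and the whole game reduces to choosing $\ket b$ so that this polynomial approximates $\varphi$ well.

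Next I would unpack the output condition. Since $\Phi=\ket v\bra v$ is rank one with $\ket v=\sum_x\varphi(x)\ket x$, the Hadamard product $\Phi\circ\proj u$ is the pure state $\ket w\bra w$ with $\ket w=\sum_x\varphi(x)u_x\ket x$, so $\FF(\Phi\circ\proj u,N\circ\proj u)=\sqrt{\bra w(N\circ\proj u)\ket w}$. Writing $\pi_x=\abs{u_x}^2$, a short computation turns $\HF(N,\Phi)\geq\sqrt{1-\eps}$ into the requirement that $\sum_{x,x'}\varphi(x)\varphi(x')N_{x',x}\pi_x\pi_{x'}\geq 1-\eps$ for \emph{every} probability distribution $\pi$. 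Introducing a Gram decomposition $N_{x',x}=\braket{n_{x'}}{n_x}$ and the unit vectors $\ket{g_x}=\varphi(x)\ket{n_x}$, this reads $\norm{\sum_x\pi_x\ket{g_x}}^2\geq 1-\eps$ for all $\pi$; equivalently, the convex hull $K=\mathrm{conv}\{\ket{g_x}\}$ avoids the open ball of radius $\sqrt{1-\eps}$ around the origin.

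The heart of the argument, and the step I expect to be the main obstacle, is choosing the reference vector with the right constant. The naive choice $\ket b\propto\ket v$ (the uniform average $\frac{1}{2^n}\sum_x\ket{g_x}$) only gives pointwise closeness $\approx 1-2\eps$, which would prove the weaker bound $\adeg_\eps(f)$ rather than $\adeg_{\eps/2}(f)$; so the uniform average is too crude. Instead I would take $\ket{y_0}$ to be the point of $K$ nearest the origin, with $\rho=\norm{y_0}\geq\sqrt{1-\eps}$. The supporting-hyperplane property of the nearest point, obtained by expanding $\norm{y_0+t(g_x-y_0)}^2\geq\rho^2$ and letting $t\to 0^+$, gives $\Re\braket{y_0}{g_x}\geq\rho^2$ for all $x$. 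Setting $\ket C=\rho^{-2}\ket{y_0}$ then yields $1\leq\Re\braket{C}{g_x}\leq\norm C=\rho^{-1}\leq(1-\eps)^{-1/2}$ by Cauchy--Schwarz, a two-sided control that is essentially tight.

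Finally I would assemble the polynomial. As $\ket{y_0}\in\Span\{\ket{n_x}\}$, there is a vector $\ket b$ with $\frac{1}{\sqrt{2^n}}\bra b N\ket x=\braket{C}{n_x}=\varphi(x)\braket{C}{g_x}$, so $p(x)=\frac12\bigl(1-\Re\tfrac{1}{\sqrt{2^n}}\bra b N\ket x\bigr)$ has degree at most $d$ by the first step. Using $f(x)=\tfrac12(1-\varphi(x))$ one gets $p(x)-f(x)=\tfrac12\varphi(x)\bigl(1-\Re\braket{C}{g_x}\bigr)$, so $\abs{p(x)-f(x)}=\tfrac12\bigl(\Re\braket{C}{g_x}-1\bigr)\leq\tfrac12\bigl((1-\eps)^{-1/2}-1\bigr)\leq\eps/2$, the last inequality holding for all $\eps\leq(\sqrt 5-1)/2$ and hence throughout the relevant error range. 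This exhibits a degree-$d$ polynomial approximating $f$ to error $\eps/2$, giving $\adeg_{\eps/2}(f)\leq d=\xpoly_\eps(\Phi)$.
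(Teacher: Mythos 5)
Your route is genuinely different from the paper's, and most of it is correct. The paper's proof invokes a property of the Hadamard product fidelity from \cite{LR12} to pass directly from $\HF(N,\Phi)\geq\sqrt{1-\eps}$ to a Gram decomposition $N_{x',x}=\braket{\psi_{x'}}{\psi_x}$ with unit vectors satisfying the pointwise condition $(-1)^{f(x)}\Re(\braket{0}{\psi_x})\geq\sqrt{1-\eps}$, and then reads off the polynomial as the amplitude $p_0$ on the fixed state $\ket{0}$. You instead work from the bare definition of $\HF$: the rank-one computation reducing it to the quadratic-form condition over all distributions $\pi$, the reformulation that the convex hull $K$ of the unit vectors $\ket{g_x}$ avoids the open ball of radius $\sqrt{1-\eps}$, and the nearest-point/supporting-hyperplane argument producing a single reference vector with good overlap with every $\ket{g_x}$ — in effect you reprove the needed direction of the \cite{LR12} output-condition lemma in this rank-one case, which makes your argument more self-contained. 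Your other key step is also sound: since $N\succeq 0$, the vanishing of $\bra{\chi_S}N\ket{\chi_S}$ for $\abs{S}>d$ forces $N\ket{\chi_S}=0$, so $x\mapsto\Re\bra{b}N\ket{x}$ is a real polynomial of degree at most $d$ for \emph{every} $\ket{b}$; this is the same mechanism as the paper's identity $\bra{\chi_S}N\ket{\chi_S}=2^n\sum_i\abs{\hat{p}_i(S)}^2$.

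The genuine defect is the final normalization, and it causes your proof as written to fall short of the theorem as stated. Taking $\ket{C}=\rho^{-2}\ket{y_0}$ gives $1\leq\Re\braket{C}{g_x}\leq\rho^{-1}$ and hence only $\abs{p(x)-f(x)}\leq\frac12\bigl((1-\eps)^{-1/2}-1\bigr)$, which exceeds $\eps/2$ whenever $\eps>(\sqrt5-1)/2\approx 0.618$, while the theorem is claimed for any $\eps\geq 0$. Your parenthetical that this covers ``the relevant error range'' is not right: $\eps=2/3$, i.e., the conclusion $\adeg_{1/3}(f)\leq\xpoly_{2/3}(\Phi)$ involving the most standard approximate degree, lies outside your range. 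Fortunately the fix is a one-line change: take the \emph{unit} vector $\ket{C}=\rho^{-1}\ket{y_0}$ instead. Then the supporting-hyperplane inequality $\Re\braket{y_0}{g_x}\geq\rho^2$ gives $\Re\braket{C}{g_x}\geq\rho\geq\sqrt{1-\eps}$, while Cauchy--Schwarz gives $\Re\braket{C}{g_x}\leq\norm{\ket{C}}\,\norm{\ket{g_x}}=1$; with the same construction of $p$ this yields $\abs{p(x)-f(x)}=\frac12\bigl(1-\Re\braket{C}{g_x}\bigr)\leq\frac12\bigl(1-\sqrt{1-\eps}\bigr)\leq\eps/2$ for all $\eps\in[0,1]$, which is exactly the bound $(1-\sqrt{1-\eps})/2$ the paper extracts from its decomposition. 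With that correction your proof is complete and is an attractive alternative to the paper's appeal to \cite{LR12}.
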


\begin{proof}
We first show that $\xpoly$ can be written as an optimization problem over polynomials. By definition, we have
\begin{align}\label{eq:maxadv-eps-error}
 \xpoly_\eps(\Phi)=\min_N\xpoly_0(N),
\end{align}
where the minimum is taken over positive semidefinite matrices $N$ such that $N\circ \I=\I$ and $\HF(N,\Phi)\geq\sqrt{1-\eps}$.

Let us write $\Phi$ as a Gram matrix $\Phi=\sum_{x,y}\braket{\varphi_x}{\varphi_y}\ketbra{y}{x}$, where $\ket{\varphi_x}=(-1)^{f(x)}\ket{0}$. Then, by the properties of the Hadamard fidelity~\cite{LR12}, the minimum in \autoref{eq:maxadv-eps-error} can be taken over Gram matrices $N=\sum_{x,y}\braket{\psi_x}{\psi_y}\ketbra{y}{x}$ such that $\ket{\psi_x}$ is a unit vector for any $x$ and $\Re(\braket{\varphi_x}{\psi_x})\geq\sqrt{1-\eps}$ for any $x$. Writing any unit vector $\ket{\psi_x}$ in the computational basis as $\ket{\psi_x}=\sum_i p_i(x)\ket{i}$, the minimum in \autoref{eq:maxadv-eps-error} can equivalently be taken over amplitudes $p_i(x)$, therefore
\begin{align*}
  \xpoly_\eps(\Phi)=\min_{(p_i)}\xpoly_0\left(\sum_{x,y,i} p_i^*(x)\; p_i(y)\ketbra{y}{x}\right),
\end{align*}
where the minimum is taken over complex functions $p_i:\{0,1\}\to\C$ such that $\sum_i |p_i(x)|^2=1$ and $(-1)^{f(x)}\Re(p_0(x))\geq\sqrt{1-\eps}$.

Finally, we show that for $N=\sum_{x,y,i} p_i^*(x) p_i(y)\ketbra{y}{x}$, we have $\xpoly_0(N)=\max_i(\deg(p_i))$. For all $S\subseteq \{1,\dots,n\}$, we have
\begin{align*}
\bra{\chi_S}N\ket{\chi_S} =\sum_{i,x,y} p_i^*(x) p_i(y)\braket{\chi_S}{y}\braket{x}{\chi_S} = 2^n \sum_i|\hat{p}_i(S)|^2,
\end{align*}
where $\hat{p}_i(S)$ are the Fourier coefficients of $p_i$. This is nonzero only if there exists an $i$ such that $\deg(p_i)\geq k$, hence $\xpoly_0(N)=\max_i(\deg(p_i))$.

To summarize, this implies that for any Boolean function $f$ with associated phase matrix $\Phi$, we have
\begin{align*}
   \xpoly_\eps(\Phi)=\min_{(p_i)}\max_i(\deg(p_i)),
\end{align*}
where the minimum is taken over a set of functions $p_i:\{0,1\}^n\to\R$ satisfying
\begin{enumerate}
 \item $\sum_i p_i(x)^2=1$ for any $x\in\{0,1\}^n$,
 \item $(-1)^{f(x)}p_0(x)\geq\sqrt{1-\eps}$ for any $x\in\{0,1\}^n$.
\end{enumerate}
Optimizing over real polynomials instead of complex ones is without loss of generality, since any complex polynomial $p_j$ can be replaced by two polynomials being the real and the imaginary part of $p_j$, and the condition $\sum_i \abs{p_i(x)}^2 = 1$ would still be satisfied.

Let $\{p_i\}$ be a set of polynomials such that $\xpoly_\eps(\Phi)=\max_i(\deg(p_i))$ and satisfying the required conditions. In particular, we have $\sqrt{1-\eps}\leq (-1)^{f(x)}p_0(x)\leq 1$ and $\deg(p_0)\leq\xpoly_\eps(\Phi)$. Setting $p(x)=(1-p_0(x))/2$, we have $\deg(p)=\deg(p_0)\leq\xpoly_\eps(\Phi)$ and
\begin{align*}
|p(x)-f(x)|\leq \frac{1-\sqrt{1-\eps}}{2}\leq\frac{\eps}{2},
\end{align*}
for any $x$, so that $p$ witnesses that $\adeg_{\eps/2}(f)\leq \deg(p)$.
%
%
\end{proof}

\section{Relation with the multiplicative adversary method}
In~\cite{AMRR11}, it was shown that in the limit $c\to 1$, the multiplicative adversary bound $\madv_0^c(\Mtarget)$ is at least as strong as the additive adversary bound $\adv^\pm(\Mtarget)$. Here, we show that the extended polynomial bound can be obtained by taking the limit $c\to\infty$.

\begin{thm}
\label{thm:xpoly-madv}
Let $\Mtarget$ be a Gram matrix, $\eps\geq 0$, $T=\xpoly_\eps(\Mtarget)$ and $\Pi_{\geq T}=\sum_{S:\abs{S}\geq T}\proj{\chi_S}$. Moreover, let $\delta>0$ be such that $\tr[\Pi_{\geq T} N]\geq\delta$ for any Gram matrix $N$ such that $\HF(N,\Mtarget) \geq \sqrt{1-\eps}$.

Then, for any $c>1$, we have
\begin{align*}
\xpoly_\eps(\Mtarget)-\frac{n-\log\delta}{\log c}
\leq
\madv_\eps^c(\Mtarget)
\leq
\xpoly_\eps(\Mtarget)+\frac{n}{\log c}.
\end{align*}
In particular, in the limit $c\to\infty$, we have
\begin{align*}
 \lim_{c\to\infty}\madv_\eps^c(\Mtarget)=\xpoly_\eps(\Mtarget).
\end{align*}
\end{thm}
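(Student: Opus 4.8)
The plan is to exploit the fact that $\madv_\eps^c(\Mtarget)$ and $\xpoly_\eps(\Mtarget)$ are both minima of a zero-error quantity over the \emph{same} feasible set $\{N\succeq 0 : N\circ\I=\I,\ \HF(N,\Mtarget)\ge\sqrt{1-\eps}\}$, so it suffices to compare $\madv_0^c(N)$ with the Fourier support of each such $N$ and then minimize. Writing $\Pi_{\le T}=\sum_{S:\abs S\le T}\proj{\chi_S}$, the whole argument revolves around a single explicit, \emph{problem-independent} adversary matrix $W=\frac1{2^n}\sum_S c^{\abs S}\proj{\chi_S}$, diagonal in the Fourier basis with weight growing geometrically in the level $\abs S$.

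First I would prove the lower bound. The matrix $W$ is feasible: it is positive semidefinite, $\tr[W\J]=2^n\bra{\chi_\emptyset}W\ket{\chi_\emptyset}=1$ since $\J=2^n\proj{\chi_\emptyset}$, and because $W\circ D_i=U_iWU_i$ merely flips bit $i$ of each Fourier label (where $U_i=\sum_x(-1)^{x_i}\proj x$), the constraint $W\circ D_i\preceq cW$ follows from $\abs{S\oplus e_i}\le\abs S+1$ and $c>1$. Since $W$ is one feasible choice, the definition gives $\madv_0^c(N)\ge\log_c\tr[WN]$, and using $\bra{\chi_S}N\ket{\chi_S}\ge 0$ together with the hypothesis $\tr[\Pi_{\ge T}N]\ge\delta$ yields $\tr[WN]\ge c^T\delta/2^n$. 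Taking $\log_c$ and minimizing over $N$ gives $\madv_\eps^c(\Mtarget)\ge T+\log_c\delta-\log_c 2^n=\xpoly_\eps(\Mtarget)-\frac{n-\log\delta}{\log c}$.

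The main obstacle will be the upper bound, because there I must control $\madv_0^c(N^\star)=\log_c\max_W\tr[WN^\star]$ \emph{over all} feasible $W$, where $N^\star$ is the $\xpoly_\eps$-minimizer and hence has range inside $\Pi_{\le T}$ (its vanishing diagonal Fourier entries above level $T$ force $N^\star\Pi_{>T}=0$ by positivity). The key step is to extract from the constraint a bound on the diagonal Fourier entries of an \emph{arbitrary} feasible $W$: iterating $W\circ D_i\preceq cW$ along a bit string and using conjugation-invariance of the Loewner order gives $U_xWU_x\preceq c^{\abs x}W$, and since $U_x\ket{\chi_\emptyset}=\ket{\chi_x}$ and $\bra{\chi_\emptyset}W\ket{\chi_\emptyset}=1/2^n$, this forces $\bra{\chi_S}W\ket{\chi_S}\le c^{\abs S}/2^n$. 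Bounding off-diagonal entries by Cauchy--Schwarz for the semi-inner product defined by $W\succeq 0$, and then applying Cauchy--Schwarz once more over the at most $2^n$ labels with $\abs S\le T$, I would obtain $\lambda_{\max}(\Pi_{\le T}W\Pi_{\le T})\le c^T$ for every feasible $W$. Since $N^\star$ is supported on $\Pi_{\le T}$ and $\tr[N^\star]=2^n$, this yields $\tr[WN^\star]\le 2^n\,\lambda_{\max}(\Pi_{\le T}W\Pi_{\le T})\le 2^nc^T$, whence $\madv_0^c(N^\star)\le T+\log_c 2^n=T+\frac n{\log c}$, and minimizing the left-hand side gives the upper bound.

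Together the two inequalities sandwich $\madv_\eps^c(\Mtarget)$ around $\xpoly_\eps(\Mtarget)=T$ with a gap of order $1/\log c$, where $\delta$, $T$ and $n$ are all fixed independently of $c$; letting $c\to\infty$ then yields $\lim_{c\to\infty}\madv_\eps^c(\Mtarget)=\xpoly_\eps(\Mtarget)$. I expect the delicate points to lie entirely in the upper bound: justifying the iterated inequality $U_xWU_x\preceq c^{\abs x}W$, and tracking the two Cauchy--Schwarz steps so that the combinatorial factor $\#\{S:\abs S\le T\}\le 2^n$ is precisely what produces the additive $n/\log c$ loss.
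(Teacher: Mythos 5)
Your proposal is correct and takes essentially the same route as the paper: the same problem-independent witness $W=\frac{1}{2^n}\sum_S c^{\abs{S}}\proj{\chi_S}$ yields the lower bound, and the upper bound rests on the same key estimate $\bra{\chi_S}W'\ket{\chi_S}\leq c^{\abs{S}}/2^n$ for every feasible $W'$ (your iterated conjugation $U_x W' U_x \preceq c^{\abs{x}} W'$ is just the paper's induction on $\abs{S}$ collapsed into a single step). The only difference is bookkeeping at the end: the paper bounds $\tr[W'N]\leq 2^n\tr[WN]$ by entrywise Cauchy--Schwarz against the Fourier diagonal of $N$, whereas you bound $\lambda_{\max}(\Pi_{\leq T}W'\Pi_{\leq T})\leq c^{T}$ and use $\tr[N]=2^n$; both give $\tr[W'N]\leq 2^n c^{T}$ and the identical additive loss $n/\log c$.
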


\paragraph{Remark.} Note that such a value of $\delta$ always exists. The quantity $\tr[\Pi_{\geq T}N]$ is non negative as it is the trace of the product of semidefinite matrices, and it cannot be equal to $0$. Assume by contradiction that $\tr[\Pi_{\geq T}N]=0$, then $\xpoly_0(N) \leq T-1$, however $N$ is an $\eps$-approximation of $M$ that has a polynomial bound of $T$.

The general idea of the proof is to consider the multiplicative adversary matrix 
\begin{align*}
	W =  \frac{1}{2^n}\sum_S c^{\abs{S}}\proj{\chi_S}
\end{align*}
as a multiplicative adversary matrix. The lower bound then follows from the fact that in the limit $c\to\infty$, the value of the progress function $W[\Mtarget]=\tr[W\Mtarget]$ will be dominated by the term in $c^{\abs{S}}$ for the set $S$ with the largest size $\abs{S}=k$ such that $\bra{\chi_S}\Mtarget\ket{\chi_S}\neq 0$, which therefore corresponds to the degree of the matrix $\Mtarget$. As for the upper bound, we show that the matrix $W$ becomes an optimal multiplicative adversary matrix in the limit $c\to\infty$. This can be shown by observing that one oracle call can only map a Fourier basis state $\ket{\chi_S}$ to another Fourier basis state $\ket{\chi_{S'}}$ with $\abs{S'}=\abs{S}\pm 1$ which implies bounds on the elements of any possible multiplicative adversary matrix written in the Fourier basis.


%
\begin{proof}
	We prove it for the zero-error case, the general case follows immediately. 

	Consider the matrix $W =  \frac{1}{2^n}\sum_S c^{|S|}\proj{\chi_S}$. It is a valid adversary matrix for $\madv^c_0(M)$ since $\tr[W\J] = 1$ and $W\circ D_i \preceq cW,\ \forall i\in\{1,\dots,n\}$. This inequality follows from $W\circ D_i = \frac{1}{2^n}\left(\sum_{S:i\in S}c^{\abs{S}-1}\proj{\chi_S}+\sum_{S:i\not\in S} c^{\abs{S}+1}\proj{\chi_S}\right)$, see proof of \autoref{thm:xpoly}. Let $W'$ be an optimal multiplicative adversary matrix for $\madv^c_0(M)$. Let us show that $\tr[WM] \leq \tr[W'M] \leq 2^n \tr[WM]$.

The first inequality is a direct consequence of the fact that $W$ is an adversary matrix for $\madv^c_0(M)$ and the definition of the multiplicative adversary bound. 

To prove the second inequality, let us first show by induction on $k=\abs{S}$ that $\bra{\chi_S}W'\ket{\chi_S} \leq \frac{1}{2^n} c^{|S|}$ for any set $S$. 
	For $k=0$, the condition $\tr[W'\J] = 1$ is equivalent to $\bra{\chi_\emptyset}W'\ket{\chi_\emptyset} = \frac{1}{2^n}$. 
	
	Let us fix $0\leq k \leq n$, and assume that $\forall S$ such that $|S|=k$, we have $\bra{\chi_S}W'\ket{\chi_S} \leq \frac{1}{2^n}c^k$. Let $S'$ be a set of size $k+1$ and decompose it into $S'= S \cup \{i\}$. Observe first that $\bra{\chi_S}W'\circ D_i\ket{\chi_S} = \bra{\chi_S}U_i W' U_i \ket{\chi_S} = \bra{\chi_{S'}}W' \ket{\chi_{S'}}$ where $U_i = \sum_x (-1)^{x_i} \proj{x}$ as defined in the proof of \autoref{thm:xpoly}. Hence by sandwiching $W'\circ D_i \preceq cW'$ with $\ket{\chi_S}$, we get $\bra{\chi_{S'}}W' \ket{\chi_{S'}} \leq c \bra{\chi_{S}}W' \ket{\chi_{S}}  \leq \frac{1}{2^n}c^{|S|+1}$.
	
	We can now proceed with the rest of the proof:
\begin{align*}
	\tr[W'M] &= \sum_S \bra{\chi_{S}}W' M \ket{\chi_{S}} 
	 = \sum_{S,S'}\bra{\chi_{S}}W' \ket{\chi_{S'}}\!\bra{\chi_{S'}} M \ket{\chi_{S}} \\
	& \leq  \sum_{S,S'} \abs{\bra{\chi_{S}}W' \ket{\chi_{S'}}} \abs{\bra{\chi_{S'}} M \ket{\chi_{S}}}.
\end{align*}
We now use the property that for any positive semidefinite matrix $A$, $\abs{A_{ij}} \leq \sqrt{A_{ii}A_{jj}}$,
\begin{align*}
	\tr[W'M] 
	& \leq \left( \sum_S \sqrt{\bra{\chi_{S}}W' \ket{\chi_S}\bra{\chi_S} M \ket{\chi_{S} }}\right)^2.
\end{align*}
Using the Cauchy-Schwarz inequality, we get:
\begin{align*}
\tr[W'M]
	 \leq 2^n \sum_S \bra{\chi_{S}}W' \ket{\chi_S}\bra{\chi_S} M \ket{\chi_{S} } 
	 \leq  \sum_S c^{|S|} \bra{\chi_S} M \ket{\chi_S} = 2^n \tr[WM].
\end{align*}

We are now ready to conclude the proof. From $\tr[WM] \leq \tr[W'M] \leq 2^n \tr[WM]$, we have by definition of $\madv^c_0(M)$
\begin{align*}
 \frac{\log\tr[WM]}{\log c}\leq\madv^c_0(M)\leq\frac{n+\log\tr[WM]}{\log c}.
\end{align*}

For $T=\xpoly_\eps(\Mtarget)$, we find from the first inequality
\begin{align*}
	\madv^c_0(\Mtarget) \geq \frac{\log \frac{1}{2^n} c^T \tr[\Pi_{\geq T} M] }{\log c} = T + \frac{\log(\tr[\Pi_{\geq T} M]) - n}{\log c}.
\end{align*}

Similarly, from the second inequality, we have
\begin{align*}
\madv^c_0(M) \leq \frac{\log \sum_S c^{\abs{S}}\bra{\chi_S} M \ket{\chi_{S} }}{\log c} \leq T + \frac{\log \sum_S \bra{\chi_S} M \ket{\chi_{S} } }{\log c}=T+\frac{n}{\log c},
\end{align*}
where we used the facts that $\bra{\chi_S} M \ket{\chi_{S}}=0$ whenever $\abs{S}>T$, and $\sum_S \bra{\chi_S} M \ket{\chi_{S}} = \tr[M]=2^n$.
\end{proof}

We note that $\madv^c_\eps(\Mtarget)$ approaches its limiting value $\xpoly_\eps(\Mtarget)$ if $c$ is large enough compared to $2^n/\delta$. In general, we cannot give a lower bound on $\delta$ in order to determine how large $c$ should be. However, for the special case of Boolean functions, and comparing to the standard polynomial method, i.e., the approximate degree $\adeg_\eps(f)$, instead of $\xpoly_\eps(\Mtarget)$, we can deduce such a general bound on how large $c$ should be, based on the following fact:
\begin{restatable}{fact}{factlargeoverlap}\label{fact:large-overlap}
Let $f$ be a Boolean function with approximate degree $T=\adeg_\eps(f)$ and $p$ be a polynomial such that $\sqrt{1-\eps}\leq (-1)^{f(x)}p(x)\leq 1$ for any $x$. Then, we have
\begin{align*}
 \sum_{S:\abs{S}\geq T} \abs{\hat{p}(S)}^2 \geq \frac{\eps^2}{2^n}.
\end{align*}
\end{restatable}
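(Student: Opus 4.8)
The plan is to exploit the minimality built into the approximate degree $T=\adeg_\eps(f)$: I would truncate $p$ to its low-degree part, argue that this truncation must fail to approximate $f$ at some point, and conclude that the residual high-degree part of $p$ carries a non-negligible pointwise value, which I then convert into the desired $\ell_2$ lower bound on the Fourier mass.

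Concretely, first I would split $p=p_{<T}+p_{\geq T}$, where $p_{<T}(x)=\sum_{S:\abs{S}<T}\hat{p}(S)(-1)^{S\cdot x}$ collects the Fourier terms of degree below $T$ and $p_{\geq T}$ the rest. Since $\varphi=(-1)^{f}$ and hence $f=(1-\varphi)/2$, the hypothesis $\sqrt{1-\eps}\leq(-1)^{f(x)}p(x)\leq 1$ translates into $\abs{p(x)-\varphi(x)}\leq 1-\sqrt{1-\eps}\leq\eps$ at every point, using the elementary estimate $\sqrt{1-\eps}\geq 1-\eps$ valid for $\eps\in[0,1]$.

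Next I would consider the polynomial $q(x)=(1-p_{<T}(x))/2$, which has degree at most $T-1$. Because $\deg(q)\leq T-1<T=\adeg_\eps(f)$, the very definition of approximate degree forbids $q$ from $\eps$-approximating $f$, so there is a point $x_0$ with $\abs{q(x_0)-f(x_0)}>\eps$. Rewriting this in terms of $\varphi$ gives $\abs{p_{<T}(x_0)-\varphi(x_0)}>2\eps$, and combining it with $\abs{p(x_0)-\varphi(x_0)}\leq\eps$ through the (reverse) triangle inequality yields $\abs{p_{\geq T}(x_0)}=\abs{p(x_0)-p_{<T}(x_0)}>\eps$.

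Finally I would pass from this single pointwise estimate to the bound on $\sum_{S:\abs{S}\geq T}\abs{\hat{p}(S)}^2$ by Cauchy--Schwarz: viewing $p_{\geq T}(x_0)=\sum_{S:\abs{S}\geq T}\hat{p}(S)(-1)^{S\cdot x_0}$ as the inner product of the high-degree coefficient vector with the sign vector $\bigl((-1)^{S\cdot x_0}\bigr)_{\abs{S}\geq T}$, whose squared norm equals the number of sets of size $\geq T$ and is therefore at most $2^n$, gives $\eps^2<\abs{p_{\geq T}(x_0)}^2\leq 2^n\sum_{S:\abs{S}\geq T}\abs{\hat{p}(S)}^2$, which rearranges to the claim. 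The only delicate points are keeping the factors of two straight between the $\{0,1\}$-valued $f$ and the $\pm 1$-valued $\varphi$, and recognizing that it is exactly this Cauchy--Schwarz step, bounding the character count crudely by $2^n$, that introduces the exponentially small factor $2^{-n}$ in the final inequality.
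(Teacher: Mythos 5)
Your proof is correct and follows essentially the same approach as the paper's own argument, just in contrapositive form: the paper assumes the Fourier tail is small, truncates $q=(1-p)/2$ to degree $<T$, and uses Cauchy--Schwarz to get an $\eps$-approximation contradicting $\adeg_\eps(f)=T$, while you invoke minimality first to find a point $x_0$ where the truncation of $p$ fails by more than $2\eps$ and then apply the same Cauchy--Schwarz estimate with the same crude bound of $2^n$ on the number of characters. (The paper's appendix does contain a genuinely different proof via dual polynomials, but your route coincides with the main proof rather than that one.)
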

We present here a proof by contradiction, but there is also an alternative proof using dual polynomials proposed to us by \v{S}palek and reproduced in \autoref{app:AlternativeProof}.
\begin{proof}
Assume towards contradiction that $\sum_{S:\abs{S}\geq T} \abs{\hat{p}(S)}^2 < \frac{\eps^2}{2^n}$, and let $q(x)=(1-p(x))/2$. Then, we have $\abs{q(x)-f(x)}\leq\eps/2$ for all $x$ and $\sum_{S:\abs{S}\geq T} \abs{\hat{q}(S)}^2< \frac{\eps^2}{4\cdot 2^n}$ by assumption on $p$. Let $q'(x)=\sum_{S:\abs{S}<T}(-1)^{S\cdot x}\hat{q}(S)$, so that $\deg(q')<T$ and
\begin{align*}
 \abs{q'(x)-q(x)}^2=\abs{\sum_{S:\abs{S}\geq T}(-1)^{S\cdot x}\hat{q}(S)}^2 
\leq 2^n\sum_{S:\abs{S}\geq T} \abs{\hat{q}(S)}^2<\frac{\eps^2}{4}
\end{align*}
for all $x$, where we have used the Cauchy-Schwarz inequality. Therefore, $\abs{q'(x)-f(x)}<\eps$ for all $x$ and the polynomial $q'$ witnesses that $\adeg_\eps(f)\leq\deg(q')<T$, a contradiction.
\end{proof}

This fact implies that ${\madv}^c_\eps(\Phi)$ becomes at least as strong as $\adeg_\eps(f)$ as soon as $c$ is large compared to $2^n/\eps$.
\begin{lem}
\label{thm:MADVApproximateDegree}
  Let $f$ be a Boolean function with associated phase matrix $\Phi$.
Then, for any $c>1$, we have
\begin{align*}
   {\madv}^c_\eps(\Phi)\geq\adeg_\eps(f)-2\cdot\frac{n-\log\eps}{\log c}.
\end{align*}
\end{lem}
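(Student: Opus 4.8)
The goal is to lower bound $\madv^c_\eps(\Phi)$ by $\adeg_\eps(f)$ up to an error term. The plan is to combine \autoref{thm:xpoly-madv}, which relates the multiplicative adversary bound to the extended polynomial bound, with \autoref{fact:large-overlap}, which controls the spectral weight of any approximating polynomial on the high-degree Fourier modes. The key point is that \autoref{thm:xpoly-madv} leaves a free parameter $\delta$ — a uniform lower bound on $\tr[\Pi_{\geq T}N]$ over all valid approximants $N$ — and for the case of Boolean functions we can make $\delta$ explicit using \autoref{fact:large-overlap}.

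First I would set $T=\adeg_\eps(f)$ and recall from the proof of \autoref{thm:polynomial-new} that any Gram matrix $N$ with $\HF(N,\Phi)\geq\sqrt{1-\eps}$ arises from amplitude functions $p_i:\{0,1\}^n\to\R$ with $\sum_i p_i(x)^2=1$ and $(-1)^{f(x)}p_0(x)\geq\sqrt{1-\eps}$, and that $\bra{\chi_S}N\ket{\chi_S}=2^n\sum_i|\hat p_i(S)|^2$. Summing over $\abs{S}\geq T$ and discarding all terms except $i=0$ gives
\begin{align*}
 \tr[\Pi_{\geq T}N]=2^n\sum_{S:\abs{S}\geq T}\sum_i|\hat p_i(S)|^2\geq 2^n\sum_{S:\abs{S}\geq T}|\hat p_0(S)|^2.
\end{align*}
Now $p_0$ satisfies exactly the hypothesis of \autoref{fact:large-overlap} (with $T=\adeg_\eps(f)$, since the approximate degree of $f$ equals that of $\Phi$ in the relevant sense), so $\sum_{S:\abs{S}\geq T}|\hat p_0(S)|^2\geq \eps^2/2^n$, yielding $\tr[\Pi_{\geq T}N]\geq\eps^2$ uniformly. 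Hence we may take $\delta=\eps^2$ in \autoref{thm:xpoly-madv}.

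Plugging $\delta=\eps^2$ into the lower bound of \autoref{thm:xpoly-madv}, and using that $\xpoly_\eps(\Phi)\geq\adeg_\eps(f)=T$ by \autoref{thm:polynomial-new} (applied directly, or noting the two coincide up to the $\eps$-vs-$\eps/2$ bookkeeping), gives
\begin{align*}
 \madv^c_\eps(\Phi)\geq \xpoly_\eps(\Phi)-\frac{n-\log\delta}{\log c}\geq \adeg_\eps(f)-\frac{n-\log\eps^2}{\log c}=\adeg_\eps(f)-2\cdot\frac{n-\log\eps}{\log c},
\end{align*}
which is the claimed inequality. The only subtlety worth stating carefully is the interface between $\adeg_\eps$ and $\xpoly_\eps(\Phi)$: \autoref{thm:polynomial-new} is phrased as $\xpoly_\eps(\Phi)\geq\adeg_{\eps/2}(f)$, so I would either invoke the matching direction that makes $T=\adeg_\eps(f)$ the correct threshold for $\Pi_{\geq T}$, or simply take $T=\adeg_\eps(f)$ throughout and verify that \autoref{fact:large-overlap} with this $T$ supplies the needed uniform $\delta$ regardless of the value of $\xpoly_\eps(\Phi)$.

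I expect the main obstacle to be precisely this bookkeeping of which degree threshold $T$ is used in $\Pi_{\geq T}$ and ensuring the uniform bound $\tr[\Pi_{\geq T}N]\geq\eps^2$ holds for \emph{every} valid approximant $N$ rather than just a single optimal one — this is what licenses its use as the parameter $\delta$ in \autoref{thm:xpoly-madv}. Once the reduction to a single scalar amplitude $p_0$ is made and \autoref{fact:large-overlap} is applied to it, the rest is a direct substitution into the already-proved bound of \autoref{thm:xpoly-madv}.
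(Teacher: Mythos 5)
Your proposal is correct and is essentially the paper's own argument: the paper proves the lemma by inlining exactly the re-derivation you describe as your fallback option --- it takes the adversary matrix $W=\frac{1}{2^n}\sum_S c^{\abs{S}}\proj{\chi_S}$ from the proof of \autoref{thm:xpoly-madv}, parametrizes every valid approximant $N$ by amplitude polynomials $p_i$, discards the $i\neq 0$ terms, and applies \autoref{fact:large-overlap} to $p_0$ with threshold $T=\adeg_\eps(f)$, which matches your computation $\tr[\Pi_{\geq T}N]\geq 2^n\sum_{S:\abs{S}\geq T}\abs{\hat p_0(S)}^2\geq\eps^2$. Two caveats, the first of which you yourself flag: the displayed chain $\madv^c_\eps(\Phi)\geq\xpoly_\eps(\Phi)-\frac{n-\log\delta}{\log c}$ with $\delta=\eps^2$ mixes thresholds and is not licensed as written (the $\delta$ in \autoref{thm:xpoly-madv} bounds $\tr[\Pi_{\geq T'}N]$ with $T'=\xpoly_\eps(\Phi)\geq\adeg_\eps(f)$, and since $\Pi_{\geq T'}\preceq\Pi_{\geq T}$ your uniform $\eps^2$ bound does not transfer to the larger threshold), so the correct route is indeed your second option --- rerun the lower-bound half of that proof with $T=\adeg_\eps(f)$, which works verbatim because $\tr[WN]\geq\frac{1}{2^n}c^{T}\tr[\Pi_{\geq T}N]$ holds for \emph{any} threshold $T$; and your final ``equality'' is an arithmetic slip, since $n-\log\eps^2=n-2\log\eps\neq 2(n-\log\eps)$, though it errs in the harmless direction, as the bound you actually derive is stronger by $n/\log c$ than the lemma claims.
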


\begin{proof}[Proof (sketch)]
Let $W = \frac{1}{2^n} \sum_S c^{\abs{S}}\proj{\chi_S}$. By definition of the multiplicative adversary method, $\madv^c_\eps(\Phi) \geq \min_N \frac{\log\tr(WN)}{\log c}$ where the minimum is taken over all Gram matrices $N$ such that $\HF(N,M) \geq \sqrt{1-\eps}$. We follow the same steps as in the proof of \autoref{thm:polynomial-new}: we express the Gram matrix $N$ as $N=\sum_{x,y}\braket{\psi_x}{\psi_y}\ketbra{y}{x}$ and parametrize the states $\ket{\psi_x}$ as $\ket{\psi_x}=\sum_i p_i(x)\ket{i}$. After relaxing the normalization condition on the states $\ket{\psi_x}$, we obtain that
\begin{align*}
	\madv^c_\eps(\Phi)\geq\frac{1}{\log c}\log \min_{p}\frac{1}{2^n}\sum_S c^{\abs{S}}\abs{\hat{p}(S)}^2,
\end{align*}
where the minimum is taken over all polynomials $p:\{0,1\}^n \mapsto \R$ satisfying $\sqrt{1-\eps}\leq (-1)^{f(x)}p(x)\leq 1$ for any $x\in\{0,1\}^n$.

Let $p$ be a polynomial achieving this minimum and $T=\adeg_\eps(f)$. Then, we have
\begin{align*}
   {\madv}^c_\eps(\Phi)&\geq\frac{1}{\log c}\log\left( \frac{1}{2^n}\sum_S c^{\abs{S}}\abs{\hat{p}(S)}^2\right) 
 \geq T- \frac{n-\log\left(\sum_{S:\abs{S}\geq T} \abs{\hat{p}(S)}^2\right)}{\log c}.
\end{align*}
The lemma then follows from \autoref{fact:large-overlap}.
\end{proof}

\section{Discussion and open questions}

Strong connections have been known for quite some time between the approximate degree of a function and its query complexity: they are polynomially related for all (total) functions for classical complexity~\cite{NS94} as well as for quantum complexity~\cite{BBC+01}. The latter is actually often equal to the approximate degree (at least up to a constant factor) for many functions, including all symmetric functions and random functions. With a large number of tight bounds proved using the polynomial method~\cite{BBC+01,AS04,Amb05-collision,AdW12} to cite only a few, this method might even seem ubiquitous. However, it is not always tight as in some rare cases the adversary method is known to yield better bounds. By clarifying the relation between the polynomial method and adversary bounds, this work provides some new insight on why this can be the case.

First, we showed that the polynomial method is a relaxation of a more general method which we called the extended polynomial method. This has a particularly nice interpretation when one wants to compute the value of a function in a register, i.e., the goal is to prepare the state $\ket{f(x)}$.\footnote{This is the standard problem studied in most articles on quantum query complexity, even though some recent works including this one have considered the problem of computing the function in the phase. Recall that \autoref{claim:OptimalOutputCondition} implies that both problems are equivalent.}
When error $\eps$ is allowed, measuring this register should yield outcome $f(x)$ with probability at least $1-\eps$, that is, the probability $p(x)$ of obtaining outcome $1$ should be close to $1$ when $f(x)=1$ and close to $0$ when $f(x)=0$. While the polynomial method only considers the degree of the probability $p(x)$, the extended polynomial method considers the degree of all the amplitudes in the final state of the algorithm, including the erroneous part. In terms of Gram matrices this corresponds to relaxing the condition $N \circ \I = \I$ to $N \circ \I \preceq \I$.\footnote{Note that with the relaxed condition $N \circ \I \preceq \I$, the matrix $N$ does not have to be a \emph{normalized} Gram matrix anymore, in which case the Hadamard product fidelity is not defined. However, one can use another output condition, for example $\gamma_2(N-M)\leq\sqrt{2\eps}$, where $\gamma_2$ denotes the Hadamard product trace norm. These output conditions are related up to a constant~\cite{LMRSS11, LR12}, so that it only affects the lower bound by at most a constant factor for bounded-error query complexity.}

In general it is not known how large the gap between the polynomial and the extended polynomial method can be. It appears to be larger by at least a factor two for some functions. Indeed, Ambainis \textit{et al.} improved the lower bound for random Boolean functions from $n/4-o(n)$ using the polynomial method, to $n/2-o(n)$ (which is tight) by bounding the degree of all amplitudes in the final state of the algorithm~\cite{ABSdW13} (their argument can be seen as a special case of the extended polynomial method).

Secondly this provides a partial answer on how the multiplicative adversary method $\madv^c$ varies with $c$. Indeed, while it was already known that $\madv^{c\to 1}_\eps(f)\geq\adv^\pm_\eps(f)$, we have proved that $\madv^{c\to\infty}_\eps(f)\geq\adeg_\eps(f)$, and in particular, $\madv^{c\to\infty}_0(f)=\deg(f)$ in the zero-error case. This implies that the gap between $\madv$ and $\madv^{c\to\infty}$ can be at least polynomially large by considering the Ambainis function~\cite{Amb06}, for which the polynomial method fails to give a tight bound, contrary to the adversary method. This gap might be explained by the fact that in the limit $c\to\infty$, the eigenbasis of the best adversary matrix is restricted to be the Fourier basis, while for smaller values, other bases can provide better bounds.

To summarize our current knowledge, the situation is the following. On the one hand, when $c$ tends to one, the multiplicative adversary method is tight for bounded-error (\cite{AMRR11}) but not for zero-error (e.g., for the \pbsc{OR} function, there is a quadratic gap). On the other hand, when $c$ tends to infinity, the multiplicative method seems better for zero-error as it proves the $\Omega(n)$ lower bound for \pbsc{OR}, but it is not always tight (Ambainis function). As for low success probability, it seems that taking $c$ bounded away from one provides an advantage, as shown in particular by the strong direct product theorems proved using the multiplicative~\cite{Spa08,LR12} and polynomial methods~\cite{KSW07,She11}.

This leaves open a few interesting questions about the behavior of the multiplicative adversary method. Can we say more about the dependence of $\madv^c$ on $c$? Can we improve the relation $\madv^{c\to1}_\eps(M) \geq \textrm{ADV}^\pm_\eps(M)$ to an equality in general? Can we characterize the set of functions for which the (extended or not) polynomial method does not provide a tight bound? Finally, does the multiplicative adversary method characterize the quantum query complexity, i.e., is it tight for any error?


\section*{Acknowledgements}
Most of this work was done at NEC Laboratories America. The authors thank M. R\"otteler, D. Gavinsky, and T. Lee for stimulating discussions; and R. de~Wolf and R. \v{S}palek for interesting comments. They also thank R. \v{S}palek for proposing the alternative proof of \autoref{fact:large-overlap} using dual polynomials. This work was supported by ARO/NSA under grant W911NF-09-1-0569. L.M. also acknowledges the support of the Ministry of Education and the National Research Foundation, Singapore. J.R. also acknowledges support from the action \emph{Mandats de Retour} of the \emph{Politique Scientifique F\'ed\'erale Belge} and the Belgian ARC project COPHYMA.

\bibliography{polyvsadversary}

\appendix

\section{Proof of \autoref{thm:MaxAdv}}
\label{app:ProofMaxAdv}

\claimmaxadv*
\begin{proof}
Let $\left\{\Pi'_k:0\leq k\leq K\right\}$ be any set of projectors satisfying the three conditions in \autoref{thm:MaxAdv}, and let $\SS'_k$ be the subspace on which $\Pi'_k$ projects. Let $\SS_k=\Span\{\ket{\chi_S}:|S|=k\}$ be the subspace on which $\Pi_k = \sum_{S:\abs{S}=k} \proj{\chi_S}$ projects. Finally, let $\SS_{\leq k}=\bigoplus_{l=0}^k \SS_l$ and $\SS'_{\leq k}=\bigoplus_{l=0}^k \SS'_l$, and similarly for $\SS_{>k}$ and $\SS'_{>k}$, as well as for the corresponding projectors $\Pi_{\leq k},\Pi'_{\leq k},\Pi_{>k}$ and $\Pi'_{>k}$.

We are going to show that for all $0 \leq k \leq n$, we have $\SS_{\leq k} \subseteq \SS'_{\leq k}$, which directly concludes the proof.  We show it by induction on $k$.

For $k=0$, the property \ding{173} reads $\tr(\Pi'_0\proj{\chi_{\emptyset}})=1$, hence $\SS_0=\Span\{\ket{\chi_{\emptyset}}\}\subseteq \SS'_0$.

Let us now fix $0<k<n$ and assume that $\SS_{\leq k}\subseteq \SS'_{\leq k}$. Since $\SS_{\leq k}\subseteq \SS'_{\leq k}\subseteq \SS'_{\leq k+1}$, it is sufficient to prove that $\forall S:\abs{S}=k+1,\ \ket{\chi_S} \in \SS'_{\leq k+1}$. 

Fix $S = S' \cup \{i\}$ such that $\abs{S} = k+1$. By property \ding{174}, we have
\begin{align*}
  \tr\left[ (\Pi'_{l}\circ D_i)\Pi'_{m} \right] = 0,
\end{align*}
for any $l,m$ such that $|l-m|>1$. Summing this equation over $0\leq l\leq k$ and $k+2\leq m \leq K$, we obtain that
\begin{align*}
	\tr\left[ (\Pi'_{\leq k}\circ D_i)\Pi'_{>k+1} \right] = 0.
\end{align*}
Since, by assumption, we have $\SS_{\leq k}\subseteq \SS'_{\leq k}$, this implies that
\begin{align*}
  	0 \leq \tr\left[(\Pi_{\leq k}\circ D_i)\Pi'_{>k+1}\right]\leq  \tr\left[(\Pi'_{\leq k}\circ D_i)\Pi'_{>k+1}\right]= 0,
\end{align*}
We use the property that $\tr(AB) \geq 0$ if $A$ and $B$ are positive semidefinite to prove the first inequality. Since all the terms in the decomposition $\Pi_{\leq k} = \sum_T \proj{\chi_T}$ are positive semidefinite matrices, we have in particular that $\tr\left[\left(\proj{\chi_{S'}}\circ D_i \right)\Pi'_{>k+1}\right] = 0$, therefore,
\begin{align*}
  \tr\left[\proj{\chi_S}\Pi'_{>k+1}\right]= 0.
\end{align*}
Since $\Pi'_{>k+1}+\Pi'_{\leq k+1}=\I$ by property \ding{172}, we can conclude that $\tr\!\left[\proj{\chi_S}\Pi'_{\leq k+1}\right]= 1$, hence $\ket{\chi_S}\in \SS'_{\leq k+1}$.
\end{proof}

\section{Alternative proof of \autoref{fact:large-overlap} }
\label{app:AlternativeProof}
\factlargeoverlap*

In this appendix, we present an alternative proof of this fact, based on the notion of dual polynomial:
\begin{lem}
\label{lem:DualPoly}
 Let $f:\{0,1\}^n\to\{0,1\}$ be a function with approximate degree $T=\adeg_\eps(f)$. Then, there exists a polynomial $d:\{0,1\}^n\to\R$, called \dt{dual polynomial}, such that
\begin{enumerate}
 \item $\sum_x \abs{d(x)} = 1$,
 \item $\sum_{x}d(x)f(x)\geq\eps$,
 \item $\sum_x d(x)\braket{x}{\chi_S}=0$ for all $S$ such that $|S|<T$.
\end{enumerate}
\end{lem}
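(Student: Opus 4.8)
The plan is to obtain the dual polynomial from linear-programming duality (equivalently Hahn--Banach duality) applied to the best low-degree approximation of $f$ in the $\ell_\infty$ norm. Let $\mathcal P_{<T}=\Span\{\chi_S:\abs S<T\}$ denote the space of real multilinear polynomials of degree strictly less than $T$ on $\{0,1\}^n$ (where $\chi_S(x)=(-1)^{S\cdot x}$, so that $\braket{x}{\chi_S}=\chi_S(x)/\sqrt{2^n}$), and set
\begin{align*}
\eta^\star=\inf_{p\in\mathcal P_{<T}}\ \max_{x}\ \abs{p(x)-f(x)},
\end{align*}
the $\ell_\infty$-distance from $f$ to $\mathcal P_{<T}$. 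Restricting to multilinear polynomials is without loss of generality since multilinearization does not increase the degree. By the very definition of $T=\adeg_\eps(f)$ as the minimal degree of an $\eps$-approximant, every $p\in\mathcal P_{<T}$ satisfies $\max_x\abs{p(x)-f(x)}>\eps$, and hence $\eta^\star\geq\eps$.

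The second step is to dualize. Since the dual of the $\ell_\infty$ norm on $\R^{\{0,1\}^n}$ is the $\ell_1$ norm, and a linear functional is represented as $g\mapsto\sum_x d(x)g(x)$ with norm $\sum_x\abs{d(x)}$, the standard duality formula for the distance from a point to a subspace gives
\begin{align*}
\eta^\star=\max\Big\{\textstyle\sum_x d(x) f(x)\ :\ \sum_x\abs{d(x)}\leq 1,\ \sum_x d(x)\,\chi_S(x)=0\ \ \forall\,\abs S<T\Big\},
\end{align*}
the maximum being attained because the feasible set is compact. Here $d$ ranges over the unit $\ell_1$-norm functionals that annihilate $\mathcal P_{<T}$, i.e.\ that carry no Fourier mass on characters of degree below $T$. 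This is a finite, feasible, bounded program, so strong duality holds with no gap.

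I would then read the three required properties off an optimal dual solution $d$. The constraints $\sum_x d(x)\chi_S(x)=0$ for $\abs S<T$ are, up to the harmless factor $1/\sqrt{2^n}$, exactly the orthogonality condition $\sum_x d(x)\braket{x}{\chi_S}=0$. Strong duality gives $\sum_x d(x)f(x)=\eta^\star\geq\eps$, which is the correlation condition. Finally $\sum_x\abs{d(x)}\leq 1$; if this inequality is strict I rescale $d$ by the factor $1/\sum_x\abs{d(x)}\geq 1$, which preserves the (homogeneous) orthogonality condition, only increases $\sum_x d(x)f(x)$, and makes $\sum_x\abs{d(x)}=1$ hold exactly, yielding the normalization condition.

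The argument is essentially a repackaging of the classical LP-duality characterization of approximate degree, so I expect no serious obstacle. The two points needing care are: justifying $\eta^\star\geq\eps$ cleanly from the definition of $\adeg_\eps$ (it suffices that no strictly-lower-degree polynomial reaches error $\eps$, so the infimum is at least $\eps$, and one need not argue that this infimum is attained), and tracking the dual bookkeeping so that the normalization emerges as $\sum_x\abs{d(x)}\leq 1$ and can be scaled up to equality without disturbing the other two conditions.
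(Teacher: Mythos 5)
Your proof is correct, but it cannot be compared against the paper's proof for a simple reason: the paper states \autoref{lem:DualPoly} without proof, treating it as the classical fact that approximate degree is characterized by duality. Your argument is precisely the standard derivation underlying that fact: restrict to multilinear polynomials, note that $T=\adeg_\eps(f)$ forces $\eta^\star=\inf_{p\in\mathcal{P}_{<T}}\max_x\abs{p(x)-f(x)}\geq\eps$ (an infimum over quantities each exceeding $\eps$, with no attainment needed), apply $\ell_\infty$/$\ell_1$ duality for the distance from a point to a finite-dimensional subspace, and read the three conditions off an optimal dual functional, rescaling to make the $\ell_1$ norm exactly $1$. All steps are sound, including the implicit identification of $\Span\{\chi_S:\abs{S}<T\}$ with degree-$(<T)$ multilinear polynomials (the change of variables $x_i\mapsto 1-2x_i$ is degree-preserving) and the factor $1/\sqrt{2^n}$ relating $\braket{x}{\chi_S}$ to $\chi_S(x)$, which is harmless in the homogeneous orthogonality constraints. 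One point worth making explicit: the rescaling step requires $d\neq 0$; when $\eps>0$ this is forced by $\sum_x d(x)f(x)=\eta^\star\geq\eps>0$, and in the degenerate case $\eps=0$ one still has $\eta^\star>0$ because $\adeg_0(f)=\deg(f)=T$ implies $f\notin\mathcal{P}_{<T}$, and the distance from a point to a closed subspace it does not belong to is strictly positive.
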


Let us now prove the fact.
\begin{proof}[Proof of \autoref{fact:large-overlap}]
Let $p$ be a polynomial achieving this minimum, and $d$ be a dual polynomial witnessing that $\adeg_\eps(f)=T$. Defining the polynomial $q$ as $q(x)=\frac{1-p(x)}{2}$, we have $\hat{p}(S)=2\hat{q}(S)$ whenever $\abs{S}\neq 0$, and $|q(x)-f(x)|\leq\eps/2$ by assumption on $p$. Moreover, by definition of $d$, we have $|\hat{d}(S)|=0$ for $|S|<T$ and $|\hat{d}(S)|\leq 1$ for any $S$, so that
\begin{align*}
	\sum_{S:\abs{S}\geq T} \abs{\hat{p}(S)}^2
	\geq 4 \sum_{S:\abs{S}\geq T} \abs{\hat{q}(S)}^2
	\geq 4 \sum_{S} \abs{\hat{d}(S)}^2\abs{\hat{q}(S)}^2.
\end{align*}
Using the Cauchy-Schwarz inequality, we get
\begin{align*}
	\sum_{S:\abs{S}\geq T} \abs{\hat{p}(S)}^2
	\geq\frac{4}{2^n}\abs{\sum_{S}\hat{d}(S)\hat{q}(S)}^2
	=\frac{4}{2^n}\abs{\sum_x d(x)q(x)}^2.
\end{align*}
Let us define $e(x)=2(q(x)-f(x))/\eps$ which is such that $\abs{e(x)}\leq 1$ for any $x$. Since $\sum_x \abs{d(x)} \leq 1$ and $\sum_x d(x) f(x)\geq\eps$, we have
\begin{align*}
	\sum_{S:\abs{S}\geq T} \abs{\hat{p}(S)}^2
	\geq \frac{4}{2^n}\abs{\sum_x d(x)f(x)+(\eps/2) \sum d(x)e(x)}^2
	\geq \frac{4}{2^n}\abs{\eps-\eps/2}^2=\frac{\eps^2}{2^n}.
\end{align*}
\end{proof}

\end{document}